\numberwithin{equation}{section}
\let\thmhead\thmhead@plain
\let\swappedhead@plain=\swappedhead
\newtheorem{theorem}{Theorem}[section]
\newtheorem{lemma}[theorem]{Lemma}
\newtheorem{proposition}[theorem]{Proposition}
\theoremstyle{definition}
\newtheorem{definition}[theorem]{Definition}
\newtheorem{remark}[theorem]{Remark}
\newtheorem{notation}[theorem]{Notation}
\newcommand{\tdr}{\mathbf{R}}
\newcommand{\tdj}{\mathbf{J}}
\newcommand{\tdt}{\mathbf{T}}
\newcommand{\tdrC}{\mathcal{R}}
\newcommand{\tdmC}{\mathcal{M}}
\newcommand{\tdnC}{\mathcal{N}}
\newcommand{\tdjC}{\mathcal{J}}
\newcommand{\tdxC}{\mathcal{X}}
\newcommand{\tdtC}{\mathcal{T}}
\newcommand{\psp}{\mathbb{R}_{>0}}
\newcommand{\spaceDb}{\hspace{3em}}
\begin{document}
\title[3D reflection maps from tetrahedron maps]
{Boundary from bulk integrability in three dimensions: \\ 3D reflection maps from tetrahedron maps}
\author{Akihito Yoneyama}
\address{Akihito Yoneyama, Institute of Physics,
University of Tokyo, Komaba, Tokyo 153-8902, Japan}
\email{yoneyama@gokutan.c.u-tokyo.ac.jp}
\maketitle
\vspace{0.5cm}
\begin{center}{\bf Abstract}\end{center}
We established a method for obtaining set-theoretical solutions to the 3D reflection equation by using known ones to the Zamolodchikov tetrahedron equation, where the former equation was proposed by Isaev and Kulish as a boundary analog of the latter.
By applying our method to Sergeev's electrical solution and a two-component solution associated with the discrete modified KP equation, we obtain new solutions to the 3D reflection equation.
Our approach is closely related to a relation between the transition maps of Lusztig's parametrizations of the totally positive part of $SL_3$ and $SO_5$, which is obtained via folding the Dynkin diagram of $A_3$ into one of $B_2$.
\vspace{0.4cm}
\tableofcontents
\addtocontents{toc}{\setcounter{tocdepth}{1}}
\section{Introduction}\label{sec 1}
The Yang-Baxter equation\cite{Bax07} and reflection equation\cite{Che84,Skl88} are highly established objects describing the bulk and boundary \textit{integrability} in two dimensions, respectively.
Solutions to the former are called \textit{$R$-matrices} and it is well-known that they are systematically obtained via Drinfeld-Jimbo quantum affine algebras $U_q(\mathfrak{g})$\cite{Dri86,Jim86b}.
On the other hand, solutions to the latter consist of $R$-matrices and additional matrices called \textit{$K$-matrices}.
So far, a lot of $K$-matrices have been obtained and an analogous approach to $R$-matrices based on coideal subalgebras of $U_q(\mathfrak{g})$ is also known\cite{DM03,RV16}.
These equations are also formulated for maps on sets instead of matrices on linear spaces.
Such solutions are called \textit{set-theoretical}\cite{Dri92} and also extensively studied in connection with various topics.
See \cite{Ves07} for a review of \textit{Yang-Baxter maps} and \cite{CCZ13,CZ14,KO19,KOY05,SVW20} for examples of \textit{reflection maps}.
\par
The Zamolodchikov tetrahedron equation\cite{Zam80} and 3D reflection equation\cite{IK97} are three-dimensional analogs of the Yang-Baxter and reflection equation, respectively.
Unlike the Yang-Baxter equation, matrix solutions to the tetrahedron equation are obtained less systematically, but various solutions are derived by many reseachers\cite{BB92,BS06,CS96,KS93,KV94,SMS96,Yon20,Zam81}.
The tetrahedron equation also admits the set-theoretical formulation and such solutions have been derived: solutions to the local Yang-Baxter equation\cite{KKS98,Ser98}; transition maps of Lusztig's parametrizations of the canonical basis of quantum (super)algebras $U_q(A_2)$ and their geometric liftings, which are often called \textit{3DR}\cite{KO12,KOY13,Yon20}; solutions for relations which invariants for variants of discrete KP equations satisfy\cite{KNPT19}; and so on.
On the other hand, there are very few known solutions to the 3D reflection equation\cite{KO12,KO13,Yon20}, despite its fundamental role.
Here, all of the known set-theoretical solutions to it are transition maps of Lusztig's parametrizations of the canonical basis of quantum (super)algebras $U_q(B_2)$ and $U_q(C_2)$, and their geometric liftings.
They are direct analogs of 3DR, and often called \textit{3DJ} for $U_q(B_2)$ and \textit{3DK} for $U_q(C_2)$.
\par
\par
In this paper, we present a novel procedure for obtaining set-theoretical solutions to the 3D reflection equation from known ones to the tetrahedron equation.
In short, we derive the 3D reflection equation by cutting a composite of the tetrahedron equation into half.
\par
This idea is motivated by the two studies.
First, for some Yang-Baxter maps, it is known that the reflection equation is obtained by focusing on a half part of an identity for six products of Yang-Baxter maps, which can be derived by repeated uses of the Yang-Baxter equation\cite{CZ14,KO19}.
Here, reflection maps are obtained by cutting Yang-Baxter maps into half.
Our approach is considered as a three-dimensional analog of it.
In the construction, it is crucial that the Yang-Baxter maps satisfy a consistency condition: they output a \textit{dual} pair if they receive a dual pair as input\cite[Corollary 5]{KO19}.
Such a condition enables us to define well-defined reflection maps from Yang-Baxter maps.
Our second motivation is related to a three-dimensional analog of such a consistency condition, which is needed to define well-defined \textit{3D reflection maps} from \textit{tetrahedron maps}.
Later, we introduce the condition (\ref{bd cond}) we call \textit{boundarizable}.
This condition (\ref{bd cond}) comes from a known relation\cite{BZ01,Lus92} which represents 3DJ as a composite of 3DR, where 3DR and 3DJ are as mentioned above.
Here, the relation is obtained associated with folding the Dynkin diagram of $A_3$ into one of $B_2$.
Our condition (\ref{bd cond}) is obtained by formulating the specific result for 3DR and 3DJ as a general consistency condition.
See Remark \ref{Lus obs rem} for more details.
Actually, by applying our procedure to Sergeev's electrical solution\cite{KKS98,LP15,Ser98} and the two-component solution associated with the discrete modified KP equation\cite{KNPT19}, we can obtain new set-theoretical solutions to the 3D reflection equation.
\par
The outline of this paper is as follows.
In Section \ref{sec 20}, we introduce the notations used throughout this paper.
In Section \ref{sec 21} and \ref{sec 22}, we introduce basic definitions related to tetrahedron and 3D reflection maps.
Section \ref{sec 23} is the main part of this paper, where we introduce the \textit{boundarization} procedure for tetrahedron maps and prove that it gives 3D reflection maps.
In section \ref{sec 3}, we present examples of known tetrahedron maps satisfying our condition, and new solutions to the 3D reflection equation are presented in Section \ref{sec 32} and \ref{sec 34}.
In this paper, we mainly focus on the cases when tetrahedron and 3D reflection maps are defined on \textit{homogeneous} spaces, but this restriction is not essential.
We present a generalization for this topic in Section \ref{sec 33}.
\subsection*{Acknowledgements}
The author thanks Atsuo Kuniba and Masato Okado for their kind interest and comments.
Special thanks are due to Atsuo Kuniba for pointing out that the tetrahedron map (\ref{3dr 1para}) by \cite{LP15} is essentially equivalent to the one by \cite{Ser98}.
\addtocontents{toc}{\setcounter{tocdepth}{2}}
\section{Boundarization of tetrahedron map}\label{sec 2}
\subsection{Notation}\label{sec 20}
Throughout this paper, $X$ denotes an arbitrary set and we employ the following notation:
\begin{notation}\label{ind not}
We set the transposition $P_{ij}:X^n\to X^n$ by
\begin{align}
P_{ij}(x_1,\cdots,x_n)
=
(x_1,\cdots,x_{i-1},x_{j},x_{i+1},\cdots,x_{j-1},x_{i},x_{j+1},\cdots,x_n)
.
\end{align}
Let $\tdr:X^3\to X^3$ denote a map given by
\begin{align}
\tdr(x,y,z)=(f(x,y,z),g(x,y,z),h(x,y,z))
\quad
(f,g,h:X^3\to X)
.
\end{align}
For $i<j<k$, we define $\tdr_{ijk}:X^n\to X^n$ by
\begin{align}
\begin{split}
\tdr_{ijk}(x_1,\cdots,x_n)
=(x_1,
\cdots,&x_{i-1},f(x_i,x_j,x_k),x_{i+1}, \\
\cdots,&x_{j-1},g(x_i,x_j,x_k),x_{j+1}, \\
\cdots,&x_{k-1},h(x_i,x_j,x_k),x_{k+1},
\cdots,x_{n})
.
\end{split}
\end{align}
Otherwise, we define $\tdr_{ijk}:X^n\to X^n$ by sandwiching $\tdr$ between the permutations which sort $(i,j,k)$ in ascending order.
For example, if $i>j>k$, we define $\tdr_{ijk}:X^n\to X^n$ by $\tdr_{ijk}=P_{ik}\tdr_{kji}P_{ik}$.
More concretely, its action is given by
\begin{align}
\begin{split}
\tdr_{ijk}(x_1,\cdots,x_n)
=(x_1,
\cdots,&x_{i-1},h(x_k,x_j,x_i),x_{i+1}, \\
\cdots,&x_{j-1},g(x_k,x_j,x_i),x_{j+1}, \\
\cdots,&x_{k-1},f(x_k,x_j,x_i),x_{k+1},
\cdots,x_{n})
.
\end{split}
\end{align}
\end{notation}
\subsection{Tetrahedron equation}\label{sec 21}
In this paper, we only consider the tetrahedron equation\cite{Zam80} without spectral parameters.
We first introduce tetrahedron maps as solutions to the tetrahedron equation as with Yang-Baxter maps in two dimensions\cite{Ves07}.
\begin{definition}
Let $\tdr:X^3\to X^3$ denote a map.
We call $\tdr$ \textit{tetrahedron map} if it satisfies the \textit{tetrahedron equation} on $X^6$:
\begin{align}
\tdr_{245}
\tdr_{135}
\tdr_{126}
\tdr_{346}
=
\tdr_{346}
\tdr_{126}
\tdr_{135}
\tdr_{245}
.
\label{te}
\end{align}
\end{definition}
\begin{figure}[t]
\centering
\includegraphics[width=0.6\textwidth]{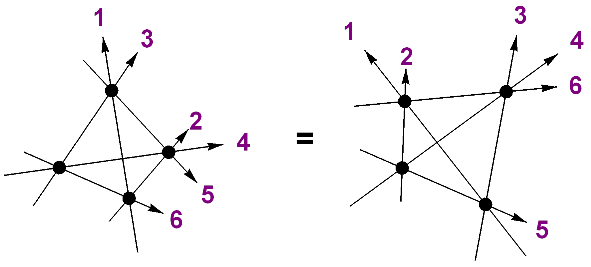}
\caption{The diagram of the tetrahedron equation (\ref{te}).}
\label{te fig}
\vspace{5mm}
\includegraphics[width=0.6\textwidth]{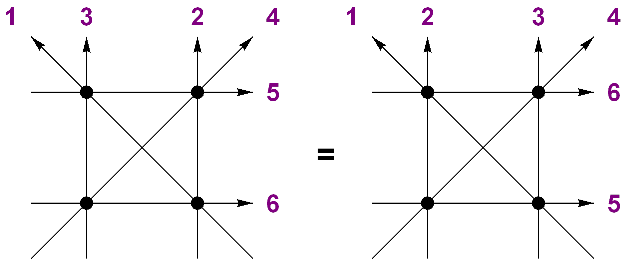}
\caption{The projection of Figure \ref{te fig} on planes.}
\label{te proj fig}
\end{figure}
The tetrahedron equation (\ref{te}) is pictorially represented as Figure \ref{te fig}, where each straight line and vertex corresponds to $X$ and $\tdr$, respectively.
Our tetrahedron equation (\ref{te}) is motivated by the reduced expressions $\mathbf{i}=(1,3,2,1,3,2)$ and $(2,1,3,2,1,3)$ for the longest element of the Weyl group of $A_3$.
See Remark \ref{Lus obs rem}.
The correspondence between our tetrahedron equation (\ref{te}) and the usual one $\tdr_{123}\tdr_{145}\tdr_{246}\tdr_{356}=\tdr_{356}\tdr_{246}\tdr_{145}\tdr_{123}$ will be presented in (\ref{te our usual}).
\par
Examples of tetrahedron maps will be given in Section \ref{sec 3}.
In this paper, we mainly focus on tetrahedron maps defined on \textit{homogeneous} spaces $X^3$ for simplicity, but Theorem \ref{main thm} can be extended to \textit{inhomogeneous} cases straightforwardly.
We will deal with such a case in Section \ref{sec 33}.
\par
From now on, we mainly focus on involutive and symmetric tetrahedron maps defined as follows:
\begin{definition}\label{invo def}
We set a tetrahedron map by $\tdr:X^3\to X^3$.
We call $\tdr$ \textit{involutive} if it satisfies $\tdr^2=\mathrm{id}$, where $\mathrm{id}$ is the identity map.
\end{definition}
\begin{definition}
We set a tetrahedron map by $\tdr:X^3\to X^3$.
We call $\tdr$ \textit{symmetric} if it satisfies $\tdr_{123}=\tdr_{321}$.
\end{definition}
These conditions are not strict.
Actually, all examples of tetrahedron maps presented in Section \ref{sec 3} are involutive, and almost all of them are symmetric except $\tdmC$ in Section \ref{sec 33}.
\par
Under these conditions, our tetrahedron equation (\ref{te}) corresponds to the usual one as follows:
\begin{align}
\begin{split}
&\tdr_{245}
\tdr_{135}
\tdr_{126}
\tdr_{346}
=
\tdr_{346}
\tdr_{126}
\tdr_{135}
\tdr_{245}
,
\\
\Longleftrightarrow
{\ }
&\tdr_{126}
\tdr_{346}
\tdr_{245}
\tdr_{135}
=
\tdr_{135}
\tdr_{245}
\tdr_{346}
\tdr_{126}
,
\\
\Longleftrightarrow
{\ }
&\tdr_{621}
\tdr_{643}
\tdr_{245}
\tdr_{135}
=
\tdr_{135}
\tdr_{245}
\tdr_{643}
\tdr_{621}
,
\end{split}
\label{te our usual}
\end{align}
where we used the involutivity for the first line and the symmetry for the second line.
Then, we obtain the usual tetrahedron equation by replacing indices as $1\to 3$, $3\to 5$, $5\to 6$ and $6\to 1$.
\subsection{3D reflection equation}\label{sec 22}
In this paper, we only consider the 3D reflection equation\cite{IK97} without spectral parameters.
We define 3D reflection maps and their involutivity as with tetrahedron maps:
\begin{definition}
Let $\tdj:X^4\to X^4$ denote a map.
We set a tetrahedron map by $\tdr:X^3\to X^3$.
We call $\tdj$ \textit{3D reflection map} if it satisfies the following \textit{3D reflection equation} on $X^9$:
\begin{align}
\tdr_{489}
\tdj_{3579}
\tdr_{269}
\tdr_{258}
\tdj_{1678}
\tdj_{1234}
\tdr_{456}
=
\tdr_{456}
\tdj_{1234}
\tdj_{1678}
\tdr_{258}
\tdr_{269}
\tdj_{3579}
\tdr_{489}
,
\label{tre}
\end{align}
where indices for $\tdj$ have the same meaning as ones given by Notation \ref{ind not}.
\end{definition}
\begin{definition}\label{invo def reflec}
We set a 3D reflection map by $\tdj:X^4\to X^4$.
We call $\tdj$ \textit{involutive} if it satisfies $\tdj^2=\mathrm{id}$.
\end{definition}
\begin{figure}[t]
\centering
\includegraphics[width=\textwidth]{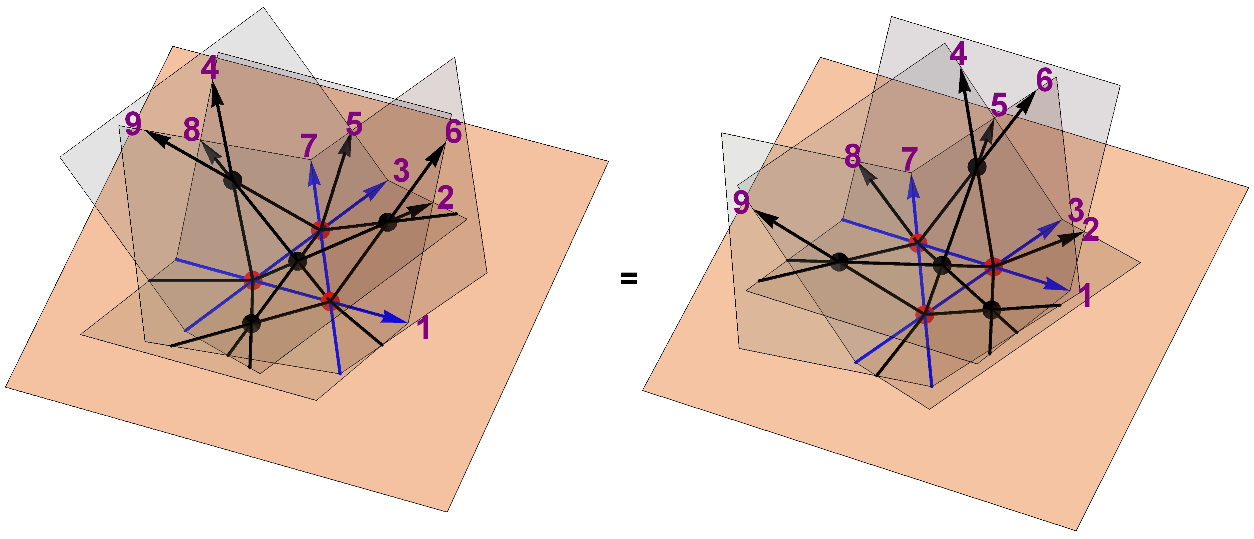}
\caption{The diagram of the 3D reflection equation (\ref{tre}).}
\label{tre fig}
\vspace{5mm}
\includegraphics[width=0.8\textwidth]{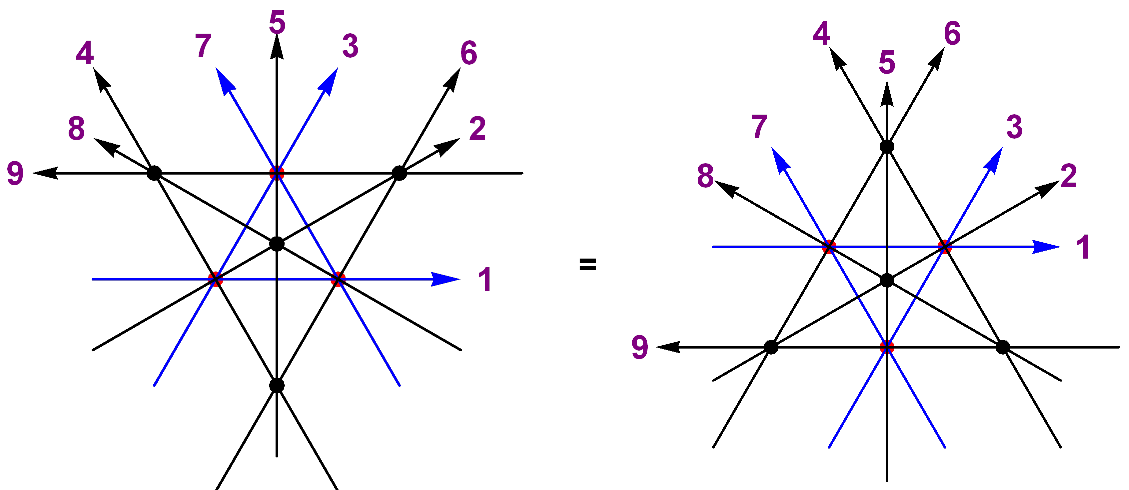}
\caption{The projection of Figure \ref{tre fig} on the boundary plane.}
\label{tre proj fig}
\end{figure}
The 3D reflection equation is pictorially represented as Figure \ref{tre fig}, where each black and red vertex corresponds to $\tdr$ and $\tdj$, respectively.
In the figure, each red vertex and blue line is on the light brown \textit{boundary plane}, and each black line hits the plane and is reflected.
It is notable that all lines in Figure \ref{tre fig} are straight.
Such a figure is obtained by putting \textit{virtual half-opened book spines} on blue lines and consider the intersections of the three books as $\tdr$, where such intersections are consequences from the elementary geometry.
In Figure \ref{tre fig}, such books are represented by translucent gray planes.
See \cite{KO13} for more detailed construction of Figure \ref{tre fig}.
\par
Examples of 3D reflection maps will be presented in Section \ref{sec 3} where all of them are involutive.
\subsection{Condition and procedure for boundarization}\label{sec 23}
We then proceed to the main part of this paper.
We first introduce some definitions for the \textit{boundarization}.
For motivations for them, see Section \ref{sec 31}.
\begin{definition}
We set a tetrahedron map by $\tdr:X^3\to X^3$.
We define a map $\tdt:X^6\to X^6$ by
\begin{align}
\tdt
=
\tdt_{123456}
=
\tdr_{245}
\tdr_{135}
\tdr_{126}
\tdr_{346}
=
\tdr_{346}
\tdr_{126}
\tdr_{135}
\tdr_{245}
.
\end{align}
We call $\tdt$ the \textit{tetrahedral composite} of the tetrahedron map $\tdr$.
\end{definition}
\begin{definition}
We define a subset of $X^6$ by
\begin{align}
Y=\{(x_1,\cdots,x_6)\mid x_2=x_3,{\ }x_5=x_6\}
.
\end{align}
We set $\phi:X^4\to Y$ and $\varphi:Y\to X^4$ by $\phi(x_1,x_2,x_3,x_4)=(x_1,x_2,x_2,x_3,x_4,x_4)$ and $\varphi(x_1,x_2,x_2,x_3,x_4,x_4)=(x_1,x_2,x_3,x_4)$, respectively.
Apparently, they are bijections and satisfy $\phi^{-1}=\varphi$ and $\varphi^{-1}=\phi$.
\end{definition}
\begin{definition}\label{bd def}
Let $\tdr:X^3\to X^3$ denote a tetrahedron map and $\tdt$ its tetrahedral composite.
We call $\tdr$ \textit{boundarizable} if the restriction of $\tdt$ to $Y$ gives a map on $Y$, that is, the following condition is satisfied:
\begin{align}
\mathbf{x}\in Y
{\ }
\Longrightarrow
{\ }
\tdt(\mathbf{x})\in Y
.
\label{bd cond}
\end{align}
\end{definition}
\begin{definition}\label{bd 3dj def}
Let $\tdr:X^3\to X^3$ denote a boundarizable tetrahedron map and $\tdt$ its tetrahedral composite.
We define the \textit{boundarization} $\tdj:X^4\to X^4$ of $\tdr$ by
\begin{align}
\tdj(\mathbf{x})
=
\varphi(\tdt(\phi(\mathbf{x})))
.
\label{bd eq}
\end{align}
\end{definition}
The boundarization given by (\ref{bd eq}) is pictorially represented as Figure \ref{bd fig}, where each black and red vertex corresponds to $\tdr$ and $\tdj$, respectively.
\begin{figure}[t]
\centering
\includegraphics[width=0.7\textwidth]{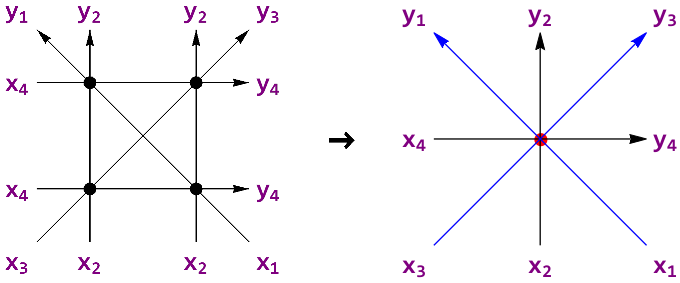}
\caption{The diagram of boundarization given by (\ref{bd eq}) where we set $y_i=\varphi(\tdt(\phi(\mathbf{x})))_{i}$ and the $i$-th component of $\varphi(\tdt(\phi(\mathbf{x})))$ by $\varphi(\tdt(\phi(\mathbf{x})))_{i}$. See also Figure \ref{te proj fig}.}
\label{bd fig}
\end{figure}
\begin{proposition}
Let $\tdr:X^3\to X^3$ denote an involutive and boundarizable tetrahedron map and $\tdj$ its boundarization.
Then, $\tdj$ is involutive.
\end{proposition}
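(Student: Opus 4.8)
The plan is to unfold the definition $\tdj = \varphi\circ\tdt\circ\phi$ from (\ref{bd eq}) and reduce the identity $\tdj^2=\mathrm{id}$ to the involutivity of the tetrahedral composite $\tdt$ on $Y$. Writing out the iterate gives $\tdj^2(\mathbf{x}) = \varphi(\tdt(\phi(\varphi(\tdt(\phi(\mathbf{x}))))))$, so the first task is to dispose of the inner occurrence of $\phi\circ\varphi$. This is exactly where the boundarizability condition (\ref{bd cond}) enters: since $\phi(\mathbf{x})\in Y$ by construction and $\tdr$ is boundarizable, the element $\tdt(\phi(\mathbf{x}))$ again lies in $Y$; because $\varphi=\phi^{-1}$, the composite $\phi\circ\varphi$ acts as the identity on any element of $Y$. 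Hence the middle $\phi\circ\varphi$ collapses and $\tdj^2(\mathbf{x}) = \varphi(\tdt^2(\phi(\mathbf{x})))$.

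The remaining and essential step is to establish $\tdt^2=\mathrm{id}$ on $X^6$. Here I would exploit the two distinct factorizations of $\tdt$, namely $\tdt=\tdr_{245}\tdr_{135}\tdr_{126}\tdr_{346}$ and $\tdt=\tdr_{346}\tdr_{126}\tdr_{135}\tdr_{245}$, and compute $\tdt^2$ by using the first expression for the left factor and the second for the right factor. The two innermost maps then meet as $\tdr_{346}\tdr_{346}$, after which the pairs $\tdr_{126}\tdr_{126}$, then $\tdr_{135}\tdr_{135}$, and finally $\tdr_{245}\tdr_{245}$ appear successively, peeling off symmetrically from the center. Each $\tdr_{ijk}$ is itself an involution: this follows from $\tdr^2=\mathrm{id}$, since $\tdr_{ijk}$ acts as $\tdr$ on the coordinates sorted into ascending order, possibly conjugated by a transposition whose square is the identity. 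Thus every such pair cancels and $\tdt^2=\mathrm{id}$.

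Combining the two steps yields $\tdj^2(\mathbf{x}) = \varphi(\tdt^2(\phi(\mathbf{x}))) = \varphi(\phi(\mathbf{x})) = \mathbf{x}$, using $\varphi\circ\phi=\mathrm{id}_{X^4}$. I do not anticipate a genuine obstacle; the only point demanding care is the telescoping cancellation in $\tdt^2$, where it is crucial to invoke \emph{two different} factorizations of $\tdt$ so that the indices pair up as $346,126,135,245$ from the middle outward, whereas using a single factorization twice would not produce adjacent repeated maps. The boundarizability hypothesis is needed precisely, and only, to guarantee that the inner $\phi\circ\varphi$ acts trivially.
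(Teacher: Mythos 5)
Your proposal is correct and follows essentially the same route as the paper's own proof: unfold $\tdj^2(\mathbf{x})=\varphi(\tdt(\phi(\varphi(\tdt(\phi(\mathbf{x}))))))$, collapse the inner $\phi\circ\varphi$ (legitimate precisely because boundarizability puts $\tdt(\phi(\mathbf{x}))$ in $Y$), and cancel $\tdt^2=\tdr_{245}\tdr_{135}\tdr_{126}\tdr_{346}\,\tdr_{346}\tdr_{126}\tdr_{135}\tdr_{245}=\mathrm{id}$ by telescoping the two factorizations of the tetrahedral composite using $\tdr^2=\mathrm{id}$. Your explicit remarks on why each $\tdr_{ijk}$ is involutive and on the necessity of using the two different factorizations merely make explicit what the paper leaves implicit.
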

\begin{proof}
Let $\tdt$ denote the tetrahedral composite of $\tdr$.
This is shown by the following direct calculation:
\begin{align}
\tdj^2(\mathbf{x})
=
\varphi(\tdt(\phi(\varphi(\tdt(\phi(\mathbf{x}))))))
=
\varphi(\tdt(\tdt(\phi(\mathbf{x}))))
=
\varphi(\phi(\mathbf{x}))
=
\mathbf{x}
,
\end{align}
where we used $\phi^{-1}=\varphi$, $\varphi^{-1}=\phi$ and $\tdt^2=\tdr_{245}\tdr_{135}\tdr_{126}\tdr_{346}\tdr_{346}\tdr_{126}\tdr_{135}\tdr_{245}=\mathrm{id}$ by $\tdr^2=\mathrm{id}$.
\end{proof}
The following lemma is the heart of our construction, which is a three-dimensional analog of \cite[(3.16)]{CZ14} and \cite[(5)]{KO19}.
\begin{lemma}\label{R20 lemma}
Let $\tdr$ denote an involutive and symmetric tetrahedron map.
Then, we have the following identity on $X^{15}$:
\begin{align}
\begin{split}
&(\tdr_{489}
\tdr_{\bar{4}\bar{8}\bar{9}})
(\tdr_{579}\tdr_{3\bar{5}9}\tdr_{35\bar{9}}\tdr_{\bar{5}7\bar{9}})
(\tdr_{26\bar{9}}
\tdr_{\bar{2}\bar{6}9})
(\tdr_{2\bar{5}8}
\tdr_{\bar{2}5\bar{8}})
\\
&\spaceDb
\times
(\tdr_{\bar{6}7\bar{8}}\tdr_{16\bar{8}}\tdr_{1\bar{6}8}\tdr_{678})
(\tdr_{234}\tdr_{1\bar{2}4}\tdr_{12\bar{4}}\tdr_{\bar{2}3\bar{4}})
(\tdr_{\bar{4}\bar{5}\bar{6}}
\tdr_{456})
\\
&=
(\tdr_{456}
\tdr_{\bar{4}\bar{5}\bar{6}})
(\tdr_{234}\tdr_{1\bar{2}4}\tdr_{12\bar{4}}\tdr_{\bar{2}3\bar{4}})
(\tdr_{\bar{6}7\bar{8}}\tdr_{16\bar{8}}\tdr_{1\bar{6}8}\tdr_{678})
\\
&\spaceDb
\times
(\tdr_{\bar{2}5\bar{8}}
\tdr_{2\bar{5}8})
(\tdr_{\bar{2}\bar{6}9}
\tdr_{26\bar{9}})
(\tdr_{579}\tdr_{3\bar{5}9}\tdr_{35\bar{9}}\tdr_{\bar{5}7\bar{9}})
(\tdr_{\bar{4}\bar{8}\bar{9}}
\tdr_{489})
.
\end{split}
\label{R20 eq}
\end{align}
\end{lemma}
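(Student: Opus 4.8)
The plan is to prove (\ref{R20 eq}) entirely by rewriting, using only the tetrahedron equation (\ref{te}), the symmetry $\tdr_{ijk}=\tdr_{kji}$, the involutivity $\tdr^2=\mathrm{id}$, and far-commutativity (maps whose index sets are disjoint commute, which is immediate from Notation \ref{ind not}). The guiding observation is that (\ref{R20 eq}) is the \emph{unfolded} form of the 3D reflection equation (\ref{tre}): the indices $1,3,7$ are the three lines lying in the boundary plane (hence never barred), while $2,4,5,6,8,9$ are the six bulk lines reflecting off it, each acquiring a barred mirror copy below the plane, giving $3+2\times 6=15$ strands. Under this reading each two-fold block, such as $(\tdr_{489}\tdr_{\bar{4}\bar{8}\bar{9}})$, is a bulk crossing together with its mirror, and — matching $\tdt=\tdr_{245}\tdr_{135}\tdr_{126}\tdr_{346}=\tdr_{346}\tdr_{126}\tdr_{135}\tdr_{245}$ — each four-fold block is a tetrahedral composite; for instance $(\tdr_{579}\tdr_{3\bar{5}9}\tdr_{35\bar{9}}\tdr_{\bar{5}7\bar{9}})$ is $\tdt$ on the slots $(3,5,\bar{5},7,9,\bar{9})$, which becomes $\tdj_{3579}$ after the folding identification $5=\bar{5},\ 9=\bar{9}$. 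Thus both sides of (\ref{R20 eq}) are words in the $\tdr_{ijk}$ encoding the \emph{same} arrangement of the fifteen strands, read in the two opposite orders dictated by the two sides of (\ref{tre}).

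First I would fix this arrangement precisely and verify that in it each of the seven parenthesized blocks on a side is exactly the set of triple crossings occurring at one of the seven vertices of Figure \ref{tre fig}, with the internal product order forced by the relative heights of those crossings. Since the two sides are then reduced words for the same transformation, the task reduces to exhibiting a sequence of elementary moves carrying the left product to the right one. The move that slides one plane of the arrangement through another is precisely a single application of (\ref{te}); a move reversing the orientation of a triple is a use of symmetry; disjoint crossings are exchanged by far-commutativity; and wherever the reflection brings a barred strand into coincidence with its unbarred partner one uses involutivity to cancel an adjacent inverse pair. It is also worth noting that the involution barring every index (with $\bar{\bar{\imath}}=i$, and $1,3,7$ fixed) sends each block to itself up to far-commutativity and the two equivalent forms of $\tdt$, so the whole identity is self-mirror.

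Concretely I would begin on the left with the trailing block $(\tdr_{\bar{4}\bar{5}\bar{6}}\tdr_{456})$, representing the crossing of the lines $4,5,6$ and its mirror (the two factors commuting freely since $\{4,5,6\}\cap\{\bar{4},\bar{5},\bar{6}\}=\emptyset$), and transport it leftward through the configuration one crossing at a time, matching at each stage the crossing pattern of an intermediate position of the sliding plane, until it reaches the leading position $(\tdr_{456}\tdr_{\bar{4}\bar{5}\bar{6}})$ of the right-hand side. Running the barred and unbarred copies in parallel — they interact only along the boundary lines $1,3,7$, and those interactions are already packaged inside the $\tdt$-blocks — and exploiting the bar-all symmetry to perform each rewrite simultaneously on both copies should roughly halve the work.

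The main difficulty will be bookkeeping rather than anything conceptual: with fifteen indices and seven multi-factor blocks per side, one must check at every step that the triple to which (\ref{te}) is applied is genuinely a triple crossing of the arrangement, that symmetry and involutivity are invoked only on adjacent factors, and that the barred/unbarred pairs stay synchronized. I expect the safest route is to reduce the identity block by block, certifying after each block that the partial product still realizes the crossing pattern of the corresponding intermediate plane position, so that the endpoint is forced to be the right-hand side of (\ref{R20 eq}).
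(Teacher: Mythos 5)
You have correctly identified both the toolkit and the geometric reading that the paper itself uses: its proof of Lemma \ref{R20 lemma} is exactly a chain of elementary rewritings by (\ref{te}), the symmetry $\tdr_{ijk}=\tdr_{kji}$, involutivity, and far-commutativity, organized by the unfolded picture of Figure \ref{R20 fig} (the indices $1,3,7$ on the boundary, barred mirror copies of $2,4,5,6,8,9$, the four-fold blocks being tetrahedral composites). The gap is that your argument stops where the paper's proof begins. The pivotal sentence ``since the two sides are then reduced words for the same transformation, the task reduces to exhibiting a sequence of elementary moves'' assumes what must be proved: in this three-dimensional setting there is no Matsumoto/Tits-type coherence theorem asserting that two words realizing the same arrangement of fifteen lines are connected by moves each of which is a legitimate application of (\ref{te}), symmetry, involutivity, or far-commutativity. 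The existence of such a chain \emph{is} the content of the lemma, and the paper establishes it only by exhibiting the chain --- about twenty-five explicit steps in Appendix \ref{app R20}. A plan for finding the chain, however plausible geometrically, verifies nothing until the chain is actually produced; dismissing this as ``bookkeeping rather than anything conceptual'' misplaces where the mathematics lives.

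Two concrete points show that this bookkeeping is genuinely where the substance lies. First, not every quadruple reversal in the chain is a substitution instance of (\ref{te}): the paper's step $\tdr_{489}\tdr_{3\bar{5}9}\tdr_{2\bar{5}8}\tdr_{234}=\tdr_{234}\tdr_{2\bar{5}8}\tdr_{3\bar{5}9}\tdr_{489}$ cannot be obtained from (\ref{te}) by relabelling slots, even allowing the reversal $\tdr_{ijk}=\tdr_{kji}$ of each factor: the incidence pattern forces the relabelling $5\mapsto 9$, $2\mapsto 8$, $4\mapsto 4$, which sends $\tdr_{245}$ to $\tdr_{849}$, and $\tdr_{849}\neq\tdr_{489}$ (they differ by conjugation with a transposition, which symmetry does not provide). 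That step is instead an instance of the converted form $\tdr_{123}\tdr_{145}\tdr_{246}\tdr_{356}=\tdr_{356}\tdr_{246}\tdr_{145}\tdr_{123}$ of (\ref{te our usual}), whose derivation needs involutivity as well as symmetry. Second, the trailing block $(\tdr_{\bar{4}\bar{5}\bar{6}}\tdr_{456})$ cannot be ``transported as a unit'': a tetrahedron-equation move applies to a vertex only after far-commutations have made it adjacent to three specific other vertices forming the correct incidence pattern, and in any valid chain (in particular the paper's) the two factors separate, $\tdr_{456}$ reversing through the quadruple $\tdr_{489}\tdr_{579}\tdr_{678}\tdr_{456}$ and $\tdr_{\bar{4}\bar{5}\bar{6}}$ through $\tdr_{\bar{4}\bar{8}\bar{9}}\tdr_{\bar{5}7\bar{9}}\tdr_{\bar{6}7\bar{8}}\tdr_{\bar{4}\bar{5}\bar{6}}$, with many interleaved commutations whose legitimacy must be checked index by index. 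Your proposal is the right strategy, and it is the paper's strategy; but as written it is a strategy only, and to prove the lemma you must write out the full sequence of moves, as the paper does in Appendix \ref{app R20}.
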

\begin{proof}
The proof is done by simply repeated uses of the tetrahedron equation with the involutivity and symmetry for $\tdr$.
The detail of the calculation is available in Appendix \ref{app R20}, where we put the underlines to the parts to be got together or deformed.
\end{proof}
For inhomogeneous cases, not all tetrahedron maps in (\ref{R20 eq}) need to be symmetric.
See Lemma \ref{R20 super lemma} for an example.
\par
The identity (\ref{R20 eq}) is pictorially represented as Figure \ref{R20 fig}, where each vertex corresponds to $\tdr$.
In the figure, black and red vertices are above the translucent light brown plane, and brown and orange ones are below the plane.
As with Figure \ref{tre fig}, it is notable that all lines in Figure \ref{R20 fig} are straight, which is also a consequence from the elementary geometry but we omit it because it is beyond the scope of this paper.
\par
In Figure \ref{R20 fig}, brown vertices are interpreted as the \textit{mirrored image} of black ones, and red and orange vertices will be boundarized to $\tdj$.
This interpretation is theorized as follows, which is the main result of this paper.
\begin{figure}[t]
\centering
\includegraphics[width=\textwidth]{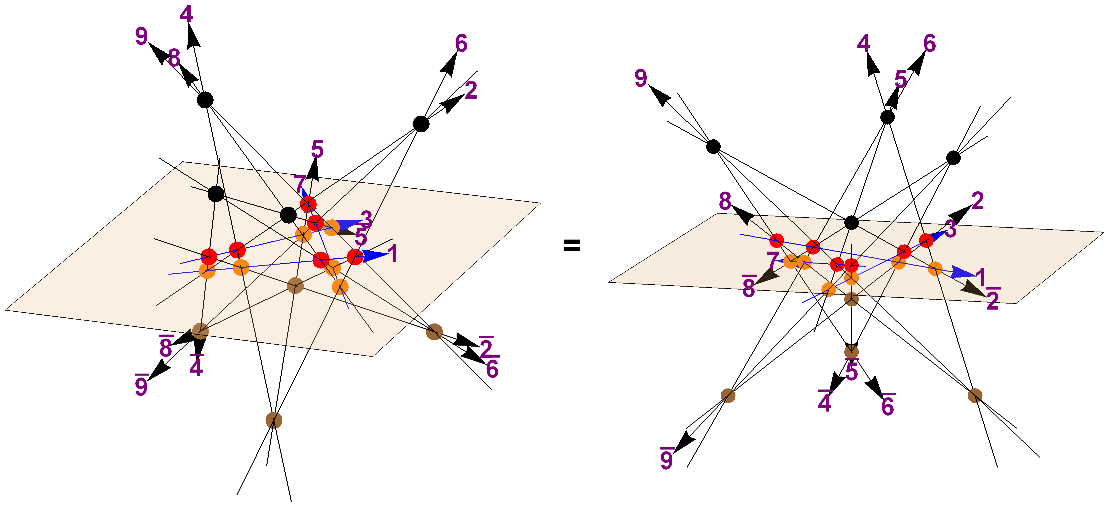}
\caption{The diagram of the identity (\ref{R20 eq}).}
\label{R20 fig}
\vspace{5mm}
\includegraphics[width=\textwidth]{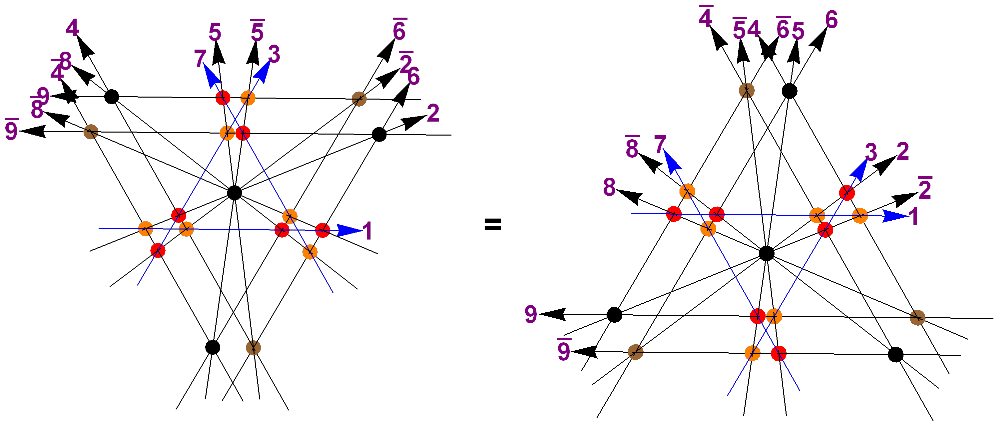}
\caption{The projection of Figure \ref{R20 fig} on the boundary plane.}
\label{R20 proj fig}
\end{figure}
\begin{theorem}\label{main thm}
Let $\tdr:X^3\to X^3$ denote an involutive, symmetric and boundarizable tetrahedron map and $\tdj:X^4\to X^4$ its boundarization.
Then, they satisfy the 3D reflection equation (\ref{tre}).
\end{theorem}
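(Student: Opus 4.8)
The plan is to specialize the identity (\ref{R20 eq}) of Lemma \ref{R20 lemma}, which is an equality of maps on $X^{15}$, to a ``reflection diagonal'' and then read off (\ref{tre}) on the nine surviving coordinates. I label the fifteen coordinates by $x_1,\dots,x_9$ together with the six barred copies $x_{\bar2},x_{\bar4},x_{\bar5},x_{\bar6},x_{\bar8},x_{\bar9}$, and set
\[
D=\{\,x_k=x_{\bar k}\text{ for all }k\in\{2,4,5,6,8,9\}\,\}\subset X^{15}.
\]
The indices $1,3,7$ have no barred partner and play the role of boundary coordinates. There is an evident bijection $\pi\colon D\to X^9$ recording, for each equated pair, its common value together with $x_1,x_3,x_7$.

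First I would verify that every factor appearing in (\ref{R20 eq}) preserves $D$ and descends through $\pi$ to the matching factor of (\ref{tre}). Two types occur. The \emph{paired} factors $\tdr_{489}\tdr_{\bar4\bar8\bar9}$, $\tdr_{26\bar9}\tdr_{\bar2\bar69}$, $\tdr_{2\bar58}\tdr_{\bar25\bar8}$ and $\tdr_{\bar4\bar5\bar6}\tdr_{456}$ are each a product of two copies of $\tdr$ on disjoint index triples exchanged by $k\leftrightarrow\bar k$. On $D$ the two triples feed $\tdr$ identical arguments, so the outputs again agree pairwise; hence $D$ is preserved and, under $\pi$, these factors become $\tdr_{489}$, $\tdr_{269}$, $\tdr_{258}$ and $\tdr_{456}$ respectively. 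The \emph{composite} factors $\tdr_{234}\tdr_{1\bar24}\tdr_{12\bar4}\tdr_{\bar23\bar4}$, $\tdr_{\bar67\bar8}\tdr_{16\bar8}\tdr_{1\bar68}\tdr_{678}$ and $\tdr_{579}\tdr_{3\bar59}\tdr_{35\bar9}\tdr_{\bar57\bar9}$ coincide, after assigning their six coordinates to the slots $(1,2,\bar2,3,4,\bar4)$, $(1,\bar6,6,7,\bar8,8)$ and $(3,5,\bar5,7,9,\bar9)$ of $\tdt$, with the tetrahedral composite $\tdt$, where in each case the two doubled slots of the subspace $Y$ are exactly the two pairs equated in $D$. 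Thus boundarizability (\ref{bd cond}) says precisely that these factors preserve $D$, and by the definition (\ref{bd eq}) of the boundarization they descend under $\pi$ to $\tdj_{1234}$, $\tdj_{1678}$ and $\tdj_{3579}$.

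Once each factor $F$ satisfies $F(D)\subseteq D$ and $\pi\circ F|_D=G\circ\pi|_D$ for its partner $G$ on $X^9$, this intertwining automatically passes to compositions. Restricting (\ref{R20 eq}) to $D$ and applying $\pi$ therefore turns its left-hand side into $\tdr_{489}\tdj_{3579}\tdr_{269}\tdr_{258}\tdj_{1678}\tdj_{1234}\tdr_{456}$ and its right-hand side into $\tdr_{456}\tdj_{1234}\tdj_{1678}\tdr_{258}\tdr_{269}\tdj_{3579}\tdr_{489}$. Since (\ref{R20 eq}) holds on all of $X^{15}$, in particular on $D$, these two maps on $X^9$ agree, which is exactly (\ref{tre}). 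Involutivity and symmetry of $\tdr$ enter only through Lemma \ref{R20 lemma}, whereas boundarizability is used precisely to make the composite factors descend to $\tdj$.

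The hard part is the index bookkeeping. One must fix a global ordering of the fifteen coordinates and check, factor by factor, that each composite in (\ref{R20 eq}) is genuinely $\tdt$ under the stated slot assignment — so that its two doubled coordinates are the pair $\{x_2=x_3,\,x_5=x_6\}$ defining $Y$ — and that each paired factor is built from disjoint, $k\leftrightarrow\bar k$-conjugate triples. Some care is also needed to confirm that a paired factor descends to $\tdr_{ijk}$ on the correctly ordered triple rather than a permutation of it; this uses the symmetry of $\tdr$ together with the convention that a barred index occupies the slot of its unbarred partner on $D$.
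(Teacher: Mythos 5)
Your proposal is correct and follows essentially the same route as the paper: the paper's proof also specializes the identity (\ref{R20 eq}) of Lemma \ref{R20 lemma}) to the diagonal input $(x_1,x_2,x_2,x_3,x_4,x_4,x_5,x_5,x_6,x_6,x_7,x_8,x_8,x_9,x_9)$, recognizes the three composite factors as instances of $\tdt$ whose doubled slots are exactly the equated pairs (so that boundarizability (\ref{bd cond}) applies and (\ref{bd eq}) turns them into $\tdj$), and reads off (\ref{tre}) on the nine surviving coordinates. The only cosmetic difference is in the final formalization: where you descend through the bijection $\pi\colon D\to X^9$, the paper instead applies a ``cutting and reconnection'' substitution $\tdt_{ij\bar{j}kl\bar{l}}\mapsto P_{j\bar{j}}P_{l\bar{l}}\tdj_{ijkl}\tdj_{\bar{i}\bar{j}\bar{k}\bar{l}}$ to obtain a direct product of two copies of the 3D reflection equation, which amounts to the same argument.
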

\begin{proof}
Let $\tdt$ denote the tetrahedral composite of $\tdr$.
By using $\tdt$, (\ref{R20 eq}) is written as follows:
\begin{align}
\begin{split}
&(\tdr_{489}
\tdr_{\bar{4}\bar{8}\bar{9}})
\tdt_{35\bar{5}79\bar{9}}
(\tdr_{26\bar{9}}
\tdr_{\bar{2}\bar{6}9})
(\tdr_{2\bar{5}8}
\tdr_{\bar{2}5\bar{8}})
\tdt_{16\bar{6}78\bar{8}}
\tdt_{12\bar{2}34\bar{4}}
(\tdr_{\bar{4}\bar{5}\bar{6}}
\tdr_{456})
\\
&=
(\tdr_{456}
\tdr_{\bar{4}\bar{5}\bar{6}})
\tdt_{12\bar{2}34\bar{4}}
\tdt_{16\bar{6}78\bar{8}}
(\tdr_{\bar{2}5\bar{8}}
\tdr_{2\bar{5}8})
(\tdr_{\bar{2}\bar{6}9}
\tdr_{26\bar{9}})
\tdt_{35\bar{5}79\bar{9}}
(\tdr_{\bar{4}\bar{8}\bar{9}}
\tdr_{489})
,
\end{split}
\label{R20 eq tetra}
\end{align}
where we assume that spaces are ordered as
\begin{align}
1\times 2\times \bar{2}\times 3\times 4\times \bar{4}\times 5\times \bar{5}\times 6\times \bar{6}\times 7\times 8\times \bar{8}\times 9\times \bar{9}
.
\end{align}
Let us act both sides of (\ref{R20 eq tetra}) on
\begin{align}
(x_1,x_2,x_2,x_3,x_4,x_4,x_5,x_5,x_6,x_6,x_7,x_8,x_8,x_9,x_9)
.
\end{align}
In that case, by using (\ref{bd cond}), it is easy to see that all $\tdt$ in (\ref{R20 eq tetra}) receive elements of $Y$ as inputs.
By considering (\ref{bd eq}), we then obtain the desired 3D reflection equation just viewing the upper side of the plane of Figure \ref{R20 fig}.
This is in the same way as the proof of \cite[Theorem 7]{KO19}.
\par
More formally, this can be shown by applying \textit{cutting and reconnection} procedure to (\ref{R20 eq tetra}) given by
\begin{align}
\tdt_{ij\bar{j}kl\bar{l}}
\mapsto
P_{j\bar{j}}
P_{l\bar{l}}
\tdj_{ijkl}
\tdj_{\bar{i}\bar{j}\bar{k}\bar{l}}
,
\label{cut and reconnect}
\end{align}
where we introduce new spaces $\bar{i}$ and $\bar{k}$ which are copies of the first and third spaces, respectively.
We assume that they receive the same inputs to $i$ and $k$, respectively, and the connectivity for $\bar{i}$ and $\bar{k}$ will be given as (\ref{R20 eq tetra cutted}).
The part $P_{j\bar{j}}P_{l\bar{l}}$ gives ``reconnection'', which does not change the outputs of (\ref{R20 eq tetra}) because all $\tdt$ in (\ref{R20 eq tetra}) output elements of $Y$ by (\ref{bd cond}).
The procedure (\ref{cut and reconnect}) is  pictorially represented as Figure \ref{bd proof fig}.
\begin{figure}[t]
\centering
\includegraphics[width=0.7\textwidth]{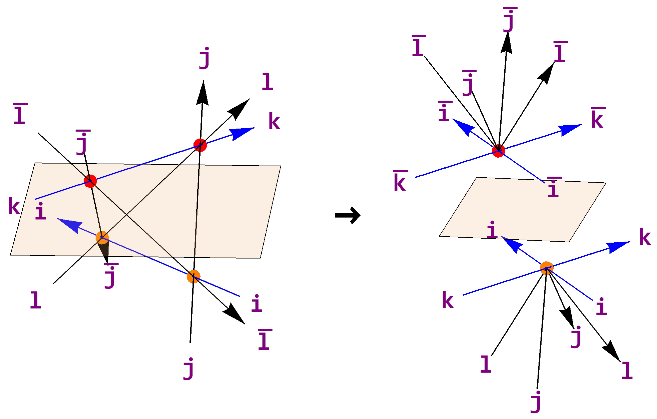}
\caption{The diagram of cutting and reconnection procedure given in the proof of Theorem \ref{main thm}.}
\label{bd proof fig}
\end{figure}
By applying (\ref{cut and reconnect}) to both sides of (\ref{R20 eq tetra}), we obtain
\begin{align}
\begin{split}
&(\tdr_{489}
\tdr_{\bar{4}\bar{8}\bar{9}})
(
P_{5\bar{5}}
P_{9\bar{9}}
\tdj_{3579}\tdj_{\bar{3}\bar{5}\bar{7}\bar{9}}
)
(\tdr_{26\bar{9}}
\tdr_{\bar{2}\bar{6}9})
(\tdr_{2\bar{5}8}
\tdr_{\bar{2}5\bar{8}})
\\
&\spaceDb
\times
(
P_{6\bar{6}}
P_{8\bar{8}}
\tdj_{1678}\tdj_{\bar{1}\bar{6}\bar{7}\bar{8}}
)
(
P_{2\bar{2}}
P_{4\bar{4}}
\tdj_{1234}\tdj_{\bar{1}\bar{2}\bar{3}\bar{4}}
)
(\tdr_{\bar{4}\bar{5}\bar{6}}
\tdr_{456})
\\
&=
(\tdr_{456}
\tdr_{\bar{4}\bar{5}\bar{6}})
(
P_{2\bar{2}}
P_{4\bar{4}}
\tdj_{1234}\tdj_{\bar{1}\bar{2}\bar{3}\bar{4}}
)
(
P_{6\bar{6}}
P_{8\bar{8}}
\tdj_{1678}\tdj_{\bar{1}\bar{6}\bar{7}\bar{8}}
)
\\
&\spaceDb
\times
(\tdr_{\bar{2}5\bar{8}}
\tdr_{2\bar{5}8})
(\tdr_{\bar{2}\bar{6}9}
\tdr_{26\bar{9}})
(
P_{5\bar{5}}
P_{9\bar{9}}
\tdj_{3579}\tdj_{\bar{3}\bar{5}\bar{7}\bar{9}}
)
(\tdr_{\bar{4}\bar{8}\bar{9}}
\tdr_{489})
.
\end{split}
\label{R20 eq tetra cutted}
\end{align}
By eliminating $P_{i\bar{i}}$, this is rewritten as follows:
\begin{align}
\begin{split}
&(\tdr_{489}
\tdr_{\bar{4}\bar{8}\bar{9}})
(
\tdj_{3579}\tdj_{\bar{3}\bar{5}\bar{7}\bar{9}}
)
(\tdr_{269}
\tdr_{\bar{2}\bar{6}\bar{9}})
(\tdr_{258}
\tdr_{\bar{2}\bar{5}\bar{8}})
(
\tdj_{1678}\tdj_{\bar{1}\bar{6}\bar{7}\bar{8}}
)
(
\tdj_{1234}\tdj_{\bar{1}\bar{2}\bar{3}\bar{4}}
)
(\tdr_{\bar{4}\bar{5}\bar{6}}
\tdr_{456})
\\
&=
(\tdr_{456}
\tdr_{\bar{4}\bar{5}\bar{6}})
(
\tdj_{1234}\tdj_{\bar{1}\bar{2}\bar{3}\bar{4}}
)
(
\tdj_{1678}\tdj_{\bar{1}\bar{6}\bar{7}\bar{8}}
)
(\tdr_{258}
\tdr_{\bar{2}\bar{5}\bar{8}})
(\tdr_{269}
\tdr_{\bar{2}\bar{6}\bar{9}})
(
\tdj_{3579}\tdj_{\bar{3}\bar{5}\bar{7}\bar{9}}
)
(\tdr_{\bar{4}\bar{8}\bar{9}}
\tdr_{489})
.
\end{split}
\label{R20 eq tetra cutted rewrite}
\end{align}
This is exactly the direct product of the 3D reflection equation.
We then obtain the desired result.
\end{proof}
\section{Examples}\label{sec 3}
\subsection{Birational and combinatorial transition map}\label{sec 31}
Let $\psp$ denote the set of positive real numbers.
We set $\tdr:\psp^3\to\psp^3$ by
\begin{align}
\tdr:
(x_1,x_2,x_3)
\mapsto
\left(
\frac{x_1x_2}{x_1+x_3},x_1+x_3,\frac{x_2x_3}{x_1+x_3}
\right)
.
\label{3dr}
\end{align}
Then, the map (\ref{3dr}) is the involutive and symmetric tetrahedron map.
This solution was obtained by \cite[(36)]{Ser98}.
Recently, this was also derived\cite[(41)]{KNPT19} by considering semi-invariants for discrete AKP equation, which appears in ABS classification as $\chi_1$\cite{ABS12}.
\par
We can verify $\tdr$ satisfies (\ref{bd cond}) by direct calculations, so it is boundarizable.
The associated boundarization $\tdj:\psp^4\to\psp^4$ is explicitly obtained as follows:
\begin{align}
\begin{split}
&\tdj:
(x_1,x_2,x_3,x_4)
\mapsto
\left(
\frac{x_1x_2^2x_3}{y_1}
,\frac{y_1}{y_2}
,\frac{y_2^2}{y_1}
,\frac{x_2x_3x_4}{y_2}
\right)
,\\
&y_1=x_1(x_2+x_4)^2+x_3x_4^2
,\quad
y_2=x_1(x_2+x_4)+x_3x_4
.
\end{split}
\label{3dj}
\end{align}
Then, $(\tdr,\tdj)$ satisfies the 3D reflection equation (\ref{tre}) by Theorem \ref{main thm}.
This is actually the known solution\cite{KO12,KO13}.
See Remark \ref{Lus obs rem} and \ref{q analog rem}.
\par
Algebraic aspects for the solution (\ref{3dr}) are deeply understood for now.
The formula (\ref{3dr}) has also appeared in \cite[Proposition 2.5]{Lus94} independently, where it was characterized as the transition map of parametrizations of the totally positive part of the special linear group $SL_3$.
Explicit formulae for such transition maps were also obtained for any semisimple Lie groups, and they all have the \textit{combinatorial} counterpart via the \textit{tropical limit}\cite{BZ01}.
For example, the map (\ref{3dr}) with $ab\to a+b$ and $a+b\to\min(a,b)$ gives the following map on $\mathbb{Z}_{\geq 0}^3$\cite[2.1]{Lus90}:
\begin{align}
\tdrC:
(x_1,x_2,x_3)
\mapsto
\left(
x_1+x_2-\min(x_1,x_3),\min(x_1,x_3),x_2+x_3-\min(x_1,x_3)
\right)
.
\label{3dr crystal}
\end{align}
This also gives the involutive and symmetric tetrahedron map.
Generally, it is known that the tropical limit of such transition maps gives the transition map of Lusztig's parametrizations of the canonical basis of the Drinfeld-Jimbo quantum algebra $U_q(\mathfrak{g})$.
The map (\ref{3dr crystal}) is the example for the case $\mathfrak{g}=sl_3$.
\par
Note that the 3D reflection map (\ref{3dj}) also admits the tropical limit, which is given by the following map on $\mathbb{Z}_{\geq 0}^4$:
\begin{align}
\begin{split}
&\tdjC:
(x_1,x_2,x_3,x_4)
\mapsto
\left(
x_1+2x_2+x_3-y_1
,y_1-y_2
,2y_2-y_1
,x_2+x_3+x_4-y_2
\right)
,\\
&y_1=\min(x_1+2\min(x_2,x_4),x_3+2x_4)
,\\
&y_2=\min(x_1+\min(x_2,x_4),x_3+x_4)
.
\end{split}
\label{3dj crystal}
\end{align}
This is also the involutive 3D reflection map, that is, $(\tdrC,\tdjC)$ satisfies the following 3D reflection equation:
\begin{align}
\tdrC_{489}
\tdjC_{3579}
\tdrC_{269}
\tdrC_{258}
\tdjC_{1678}
\tdjC_{1234}
\tdrC_{456}
=
\tdrC_{456}
\tdjC_{1234}
\tdjC_{1678}
\tdrC_{258}
\tdrC_{269}
\tdjC_{3579}
\tdrC_{489}
.
\label{tre crystal}
\end{align}
This is due to the fact that the 3D reflection map (\ref{3dj}) has a similar algebraic origin to the tetrahedron map (\ref{3dr}).
Actually, these maps (\ref{3dr}) and (\ref{3dj}) are 3DR and 3DJ referered in Section \ref{sec 1}, respevtively.
See Remark \ref{Lus obs rem}.
\begin{remark}\label{Lus obs rem}
The 3D reflection map (\ref{3dj}) is exactly the transition map of parametrizations of the totally positive part of the special orthogonal group $SO_5$ just as the tetrahedron map (\ref{3dr}) is one of $SL_3$.
This is due to the fact\cite{BZ01,Lus92} that the boundarization relation (\ref{bd eq}) for these maps is exactly the relation obtained via folding the Dynkin diagram of $A_3$ into one of $B_2$.
Here, it is crucial that the transition map for $SL_4$ is given by the tetrahedral composite of $\tdr$ like \cite[(2.44),(2.45)]{KO12}, which is associated with the reduced expressions $\mathbf{i}=(1,3,2,1,3,2)$ and $(2,1,3,2,1,3)$ for the longest element of the Weyl group of $A_3$.
This result is one of the main motivations for this study as explained in Section \ref{sec 1}.
\par
The relation between $\tdr$ and $\tdj$ was first observed by Lusztig\cite[12.5]{Lus92} and formally mentioned by \cite[Proof of Theorem 5.2]{BZ01}.
Later, Lusztig gave the detailed derivation of the above proposition in terms of Dynkin diagram automorphisms\cite[1.9]{Lus11}, and further developments are recently given by \cite{SZ20}.
\end{remark}
\begin{remark}\label{q analog rem}
The \textit{quantum} counterparts for the maps (\ref{3dr crystal}) and (\ref{3dj crystal}) are also known\cite{KV94,KO13}, which are the solutions to the tetrahedron and 3D reflection equation for matrices.
They are obtained as the intertwiners of irreducible representations of the quantum coodinate rings $A_q(SL_3)$\cite[1.12(c)]{KV94} and $A_q(SO_5)$\cite[Theorem 4.2]{KO13}, respectively.
It is remarkable that they are also characterized as the transition matrices of PBW bases of the nilpotent subalgebra of the quantum alegbras $U_q(sl_3)$ and $U_q(so_5)$, respectively\cite{KOY13}.
\par
The fact that the maps (\ref{3dr}) and (\ref{3dj}) satisfy the 3D reflection equation (\ref{tre}) was found as a by-product in the process of seeking for a geometric counterpart of such intertwiners\cite{KO12}, although the maps (\ref{3dr}) and (\ref{3dj}) themselves are known long before these developments.
\end{remark}
\subsection{Sergeev's electrical solution}\label{sec 32}
Here, we assume all variables are generic.
For $\lambda\in\mathbb{C}$, we set $\tdr(\lambda):\mathbb{C}^3\to\mathbb{C}^3$ by
\begin{align}
\tdr(\lambda):
(x_1,x_2,x_3)
\mapsto
\left(
\frac{x_1x_2}{x_1+x_3+\lambda x_1x_2x_3},x_1+x_3+\lambda x_1x_2x_3,\frac{x_2x_3}{x_1+x_3+\lambda x_1x_2x_3}
\right)
.
\label{3dr 1para}
\end{align}
Then, the map (\ref{3dr 1para}) is involutive and symmetric, and satisfies the following tetrahedron equation:
\begin{align}
\tdr(\lambda)_{245}
\tdr(\lambda)_{135}
\tdr(\lambda)_{126}
\tdr(\lambda)_{346}
=
\tdr(\lambda)_{346}
\tdr(\lambda)_{126}
\tdr(\lambda)_{135}
\tdr(\lambda)_{245}
.
\label{te ele}
\end{align}
Apparently, this solution is a one-parametric generalization of (\ref{3dr}).
Note that we can obtain (\ref{3dr 1para}) from the case $\lambda=1$ by using simple component-wise similarity transformations, so the scale of $\lambda$ does not matter.
The case for $\lambda=1$ was first obtained by \cite[(28)]{Ser98} (see also \cite{KKS98}) and later the parametric representation (\ref{3dr 1para}) was introduced by \cite[Section 4, Remark]{LP15} from an algebraic point of view.
As mentioned by \cite{Ser98}, it is associated with the electric network transformation, so often called the \textit{electrical solution}, and so on.
Interestingly, this solution was also derived\cite[(43)]{KNPT19} associated with discrete BKP equation just as (\ref{3dr}) is associated with discrete AKP equation.
\par
We can verify $\tdr(\lambda)$ satisfies (\ref{bd cond}) by dicrect calculations, so it is boundarizable.
The associated boundarization $\tdj(\lambda):\mathbb{C}^4\to\mathbb{C}^4$ is explicitly obtained as follows:
\begin{align}
\begin{split}
&\tdj(\lambda):
(x_1,x_2,x_3,x_4)
\mapsto
\left(
\frac{x_1x_2^2x_3}{y_1}
,\frac{y_1}{y_2}
,\frac{y_2^2}{y_1}
,\frac{x_2x_3x_4}{y_2}
\right)
,\\
&y_1=x_1(x_2+x_4)(x_2+x_4+2\lambda x_2x_3x_4)+x_3x_4^2
,\\
&y_2=x_1(x_2+x_4+2\lambda x_2x_3x_4)+x_3x_4
.
\end{split}
\label{3dj 1para}
\end{align}
Similarly to (\ref{3dr 1para}), this gives (\ref{3dj}) when we set $\lambda=0$.
We note that the formula (\ref{3dj 1para}) appears in \cite[Section 5, Remark]{LP15} without any derivations.
We hope that our result gives a new insight into the topic about electrical Lie groups.
By Theorem \ref{main thm}, $(\tdr(\lambda),\tdj(\lambda))$ gives a new solution to the 3D reflection equation, which is a one-parametric generalization of one of Section \ref{sec 31}:
\begin{proposition}
\begin{align}
\begin{split}
&\tdr(\lambda)_{489}
\tdj(\lambda)_{3579}
\tdr(\lambda)_{269}
\tdr(\lambda)_{258}
\tdj(\lambda)_{1678}
\tdj(\lambda)_{1234}
\tdr(\lambda)_{456}
\\
&=
\tdr(\lambda)_{456}
\tdj(\lambda)_{1234}
\tdj(\lambda)_{1678}
\tdr(\lambda)_{258}
\tdr(\lambda)_{269}
\tdj(\lambda)_{3579}
\tdr(\lambda)_{489}
.
\end{split}
\label{tre 1para}
\end{align}
\end{proposition}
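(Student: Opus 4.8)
The plan is to obtain (\ref{tre 1para}) as an immediate consequence of Theorem \ref{main thm}. That theorem asserts that the boundarization of any involutive, symmetric and boundarizable tetrahedron map solves the 3D reflection equation, so it suffices to verify that $\tdr(\lambda)$ enjoys these three properties and that the explicit map $\tdj(\lambda)$ of (\ref{3dj 1para}) is indeed its boundarization in the sense of (\ref{bd eq}). Granting these, the pair $(\tdr(\lambda),\tdj(\lambda))$ satisfies (\ref{tre}), which is exactly (\ref{tre 1para}).

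First I would record the structural properties of $\tdr(\lambda)$. Writing $D=x_1+x_3+\lambda x_1x_2x_3$ for the common denominator, involutivity $\tdr(\lambda)^2=\mathrm{id}$ and symmetry $\tdr(\lambda)_{123}=\tdr(\lambda)_{321}$ follow by direct substitution, since $D$ is invariant under $x_1\leftrightarrow x_3$ and is reproduced as the middle output, while the tetrahedron equation (\ref{te ele}) is the remaining structural input. Rather than expand both sides of (\ref{te ele}) on $X^6$ for general $\lambda$, I would exploit the uniform scaling $S_\mu\colon(x_1,\dots,x_n)\mapsto(\mu x_1,\dots,\mu x_n)$: a one-line computation gives $S_\mu^{-1}\,\tdr(1)\,S_\mu=\tdr(\mu^2)$, so every $\tdr(\lambda)$ with $\lambda\neq0$ is conjugate to the electrical solution at $\lambda=1$ upon choosing a square root $\mu$ of $\lambda$. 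Because $S_\mu$ is diagonal, conjugation by it carries $\tdr(1)_{ijk}$ to $\tdr(\mu^2)_{ijk}$ and therefore preserves the tetrahedron equation, involutivity and symmetry; thus all three reduce to the single case $\lambda=1$.

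Next I would treat boundarizability together with the identification of $\tdj(\lambda)$. Forming the tetrahedral composite $\tdt=\tdr(\lambda)_{245}\tdr(\lambda)_{135}\tdr(\lambda)_{126}\tdr(\lambda)_{346}$, I must check the implication (\ref{bd cond}), i.e. that $\tdt$ maps $Y=\{x_2=x_3,\ x_5=x_6\}$ into itself, and then evaluate $\tdj(\lambda)=\varphi\circ\tdt\circ\phi$. Since $S_\mu$ preserves $Y$ and intertwines $\phi,\varphi$ with their four-variable analogues, the whole construction is conjugation-equivariant: boundarizability at $\lambda=1$ transports to all $\lambda\neq0$, and $\lambda=0$ is the case already settled in Section \ref{sec 31}, while the boundarization transforms as $\tdj(\mu^2)=S_\mu^{-1}\,\tdj(1)\,S_\mu$. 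Feeding $(x_1,x_2,x_2,x_3,x_4,x_4)$ through the four maps and simplifying then yields the compact form (\ref{3dj 1para}), with $y_1$ and $y_2$ arising as the denominators generated along the composition.

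The main obstacle is precisely this explicit evaluation of $\varphi\circ\tdt\circ\phi$: it requires chaining four one-parameter birational maps in six variables and collapsing the resulting rational functions to (\ref{3dj 1para}). The scaling-equivariance noted above is the key device for keeping this manageable, as it collapses every verification to a single convenient value of the parameter; and the specialization $\lambda=0$, which must reproduce (\ref{3dj}), provides a ready consistency check on the computation.
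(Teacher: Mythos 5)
Your proposal is correct and follows essentially the same route as the paper: verify that $\tdr(\lambda)$ is involutive, symmetric and boundarizable with boundarization $\tdj(\lambda)$, and then invoke Theorem \ref{main thm}. Even your scaling device $S_\mu^{-1}\,\tdr(1)\,S_\mu=\tdr(\mu^2)$ matches the paper's remark that (\ref{3dr 1para}) is obtained from the case $\lambda=1$ by component-wise similarity transformations, so the only substantive work left in both treatments is the direct computation establishing (\ref{bd cond}) and the explicit form (\ref{3dj 1para}).
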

\vspace{0mm}
\subsection{Super analog of combinatorial transition map}\label{sec 33}
In this section, we present an inhomogeneous extension of Theorem \ref{main thm}.
We set $\tdmC:\mathbb{Z}_{\geq 0}\times\{0,1\}\times\{0,1\}\to\mathbb{Z}_{\geq 0}\times\{0,1\}\times\{0,1\}$ and $\tdnC:\{0,1\}\times\mathbb{Z}_{\geq 0}\times\{0,1\}\to\{0,1\}\times\mathbb{Z}_{\geq 0}\times\{0,1\}$ by
\begin{align}
&\tdmC:
(x_1,x_2,x_3)
\mapsto
\left(
x_1+x_2-v
,v
,x_2+x_3-v
\right)
\label{3dm}
,\\
&\tdnC:
(x_1,x_2,x_3)
\mapsto
\left(
x_1+w
,x_2-w
,x_3+w
\right)
,
\label{3dn}
\end{align}
where we set $v=\min(x_1+x_2,x_3)$ and $w=\min(x_2,1-x_1-x_3)$.
Then, both of them are involutive and the map (\ref{3dn}) is also symmetric.
They satisfy the following tetrahedron equations on $\mathbb{Z}_{\geq 0}\times\{0,1\}^2\times\mathbb{Z}_{\geq 0}\times\{0,1\}^2$ and $\mathbb{Z}_{\geq 0}^3\times\{0,1\}^3$, respectively:
\begin{lemma}
\begin{align}
\tdnC_{245}
\tdmC_{135}
\tdmC_{126}
\tdnC_{346}
&=
\tdnC_{346}
\tdmC_{126}
\tdmC_{135}
\tdnC_{245}
,
\label{te super 1}
\\
\tdrC_{123}
\tdmC_{145}
\tdmC_{246}
\tdmC_{356}
&=
\tdmC_{356}
\tdmC_{246}
\tdmC_{145}
\tdrC_{123}
,
\label{te super 2}
\end{align}
where $\tdrC$ is given by (\ref{3dr crystal}).
\end{lemma}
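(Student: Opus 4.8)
The plan is to read both identities as equalities of explicitly given piecewise-linear maps and to verify them directly, exploiting that every coordinate is confined either to $\mathbb{Z}_{\geq 0}$ or to the two-element set $\{0,1\}$. First I would check well-definedness, i.e.\ that $\tdmC$, $\tdnC$ and $\tdrC$ each send their inhomogeneous domain into itself, so that every composite in (\ref{te super 1}) and (\ref{te super 2}) actually threads the correct product of $\mathbb{Z}_{\geq 0}$- and $\{0,1\}$-factors. For $\tdmC$ this reduces to $v=\min(x_1+x_2,x_3)\in\{0,1\}$ (since $0\le v\le x_3\le 1$) together with $x_2+x_3-v\in\{0,1\}$ and $x_1+x_2-v\ge 0$; for $\tdnC$ one checks case by case in $(x_1,x_3)$ that $w=\min(x_2,1-x_1-x_3)$ keeps the first and third outputs in $\{0,1\}$ (in particular $w=-1$ in the case $x_1=x_3=1$, which turns a fermion pair into a boson). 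Only after this are the two sides meaningful as compositions. The key structural reduction is then that, since every factor is an involution and the two sides of each identity are the reverse-ordered products of the \emph{same} factors, the right-hand side equals the inverse of the left-hand side; hence each tetrahedron equation is equivalent to the single statement that its left-hand composite squares to the identity. This replaces the comparison of two four-fold composites by the analysis of one map.

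To verify that the relevant composite is an involution I would split on the fermionic coordinates and then do a region check in the bosonic ones. In (\ref{te super 1}) the coordinates $2,3,5,6$ lie in $\{0,1\}$ and $1,4$ in $\mathbb{Z}_{\geq 0}$, so I would fix each of the $2^4$ fermionic input patterns, track how the binary values propagate through the composite, and for each pattern reduce both occurrences of the map to piecewise-linear expressions in the two bosonic variables $x_1,x_4$. I would then subdivide $\mathbb{Z}_{\geq 0}^2$ into the finitely many polyhedral cones on which every nested $\min$ in the composite resolves to a fixed linear branch, and check the involution property as an affine identity on each cone. Equation (\ref{te super 2}) is handled identically, now with the three fermionic coordinates $4,5,6$ (hence $2^3$ patterns) and the three bosonic coordinates $1,2,3$, the fully bosonic vertex carrying $\tdrC$ from (\ref{3dr crystal}); note that these are two genuinely distinct identities, since the vertex maps differ ($\{\tdnC,\tdmC\}$ versus $\{\tdrC,\tdmC\}$), so both require verification.

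To keep the bookkeeping manageable I would use the data recorded just above the Lemma: all three maps are involutive and $\tdrC,\tdnC$ are symmetric. The symmetry of $\tdrC$ and $\tdnC$ lets me normalize index orders within each equation exactly as in the manipulation (\ref{te our usual}), collapsing several fermionic patterns onto one another; the caveat is that $\tdmC$ is \emph{not} symmetric, so any relabelling must preserve which slot of $\tdmC$ is bosonic. A cleaner, more conceptual route I would pursue in parallel is to realize $\tdmC,\tdnC$ as the tropical (ultradiscrete) limit of a subtraction-free birational two-component solution — the one attached to the discrete modified KP equation in \cite{KNPT19} — in the same way that $\tdrC$ in (\ref{3dr crystal}) is the tropical limit of $\tdr$ in (\ref{3dr}). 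If the birational lift satisfies the matching birational tetrahedron equations as subtraction-free rational identities, then applying $ab\mapsto a+b$ and $a+b\mapsto\min(a,b)$ termwise transports them to (\ref{te super 1}) and (\ref{te super 2}), with the $\{0,1\}$-restriction inherited as the locus where the relevant exponents stay binary.

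The main obstacle in the direct route is purely combinatorial: correctly enumerating the polyhedral regions cut out by the several nested minima inside a four-fold composite and not dropping a branch while propagating the fermionic values, a task made delicate by the fact that $\tdnC$ can convert a boson into a pair of fermions (and back), so the intermediate binary patterns are not simply inherited from the input. In the tropical route the real content shifts to checking that the chosen birational identity is genuinely subtraction-free — so that tropicalization is legitimate and introduces no spurious cancellations — and that the two-component $\{0,1\}$-restriction survives the limit. I expect the direct verification, organized by the involution reduction above and trimmed by the symmetry of $\tdrC$ and $\tdnC$, to be the most reliable path to a complete proof.
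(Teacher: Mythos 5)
Your proposal is correct, and its computational core --- exhaustion over the finitely many $\{0,1\}$-patterns combined with case analysis in the $\mathbb{Z}_{\geq 0}$-variables --- is exactly the paper's method for (\ref{te super 1}): the computations displayed in the paper's proof (the cases $x_1,x_4\geq 1$, $x_1=0$, $x_4=0$ applied to $(x_1,0,0,x_4,0,0)$) are precisely instances of your scheme. You do add two things the paper does not have. First, the involution reduction: since $\tdrC$, $\tdmC$, $\tdnC$ are involutions, each indexed factor is an involution on the full product space, and the right-hand sides of (\ref{te super 1}) and (\ref{te super 2}) are the reverse-ordered products of the left-hand factors, hence their inverses; so each identity is equivalent to the single assertion that its left-hand composite squares to the identity. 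This is valid and buys a genuine, if modest, economy of bookkeeping. Second, you verify (\ref{te super 2}) directly, whereas the paper simply quotes it as \cite[Corollary 6.6]{Yon20}; your route is more self-contained at the price of redoing a known computation. One caution: your ``cleaner, conceptual'' parallel route is not supported by the provenance of these maps. According to the paper, $\tdmC$ and $\tdnC$ arise as crystal limits of transition matrices of PBW bases for type-A quantum \emph{super}algebras, not as tropicalizations of the two-component dmKP solution (\ref{3dr vec}) of \cite{KNPT19}; the latter lives on the homogeneous space $(\mathbb{C}^2)^3$, while $\tdmC$, $\tdnC$ mix $\mathbb{Z}_{\geq 0}$- and $\{0,1\}$-factors inhomogeneously, so no subtraction-free birational lift with the required structure is in evidence, and the claim that the $\{0,1\}$-restriction survives tropicalization would itself need proof. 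Since you designate the direct verification as the primary path, this caveat does not undermine your plan.
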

\begin{proof}
The second equation (\ref{te super 2}) is exactly \cite[Corollary 6.6]{Yon20}.
The first equation (\ref{te super 1}) can be proved by exhaustion on values of the subspace $\{0,1\}^4$.
For example, if we apply the both sides of (\ref{te super 1}) to $(x_1,0,0,x_4,0,0){\ }(x_1,x_4\in\mathbb{Z}_{\geq 0})$, they are calculated as follows:
\begin{enumerate}[(1)]
\setlength{\leftskip}{1em}
\item
\par
The case $x_1,x_4\geq 1$.
\begin{align}
\begin{split}
(x_1,0,0,x_4,0,0)
&\overset{\tdnC_{346}}{\mapsto}
(x_1,0,1,x_4-1,0,1)
\overset{\tdmC_{126}}{\mapsto}
(x_1-1,1,1,x_4-1,0,0)
\\
&\overset{\tdmC_{135}}{\mapsto}
(x_1,1,0,x_4-1,1,0)
\overset{\tdnC_{245}}{\mapsto}
(x_1,0,0,x_4,0,0)
,
\end{split}
\label{te super 1 proof 1 case1}
\\
\begin{split}
(x_1,0,0,x_4,0,0)
&\overset{\tdnC_{245}}{\mapsto}
(x_1,1,0,x_4-1,1,0)
\overset{\tdmC_{135}}{\mapsto}
(x_1-1,1,1,x_4-1,0,0)
\\
&\overset{\tdmC_{126}}{\mapsto}
(x_1,0,1,x_4-1,0,1)
\overset{\tdnC_{346}}{\mapsto}
(x_1,0,0,x_4,0,0)
.
\end{split}
\label{te super 1 proof 2 case1}
\end{align}
\item
\par
The case $x_1=0$ and $x_4\geq 1$.
\begin{align}
\begin{split}
(0,0,0,x_4,0,0)
&\overset{\tdnC_{346}}{\mapsto}
(0,0,1,x_4-1,0,1)
\overset{\tdmC_{126}}{\mapsto}
(0,0,1,x_4-1,0,1)
\\
&\overset{\tdmC_{135}}{\mapsto}
(1,0,0,x_4-1,1,1)
\overset{\tdnC_{245}}{\mapsto}
(1,0,0,x_4-1,1,1)
,
\end{split}
\label{te super 1 proof 1 case2}
\\
\begin{split}
(0,0,0,x_4,0,0)
&\overset{\tdnC_{245}}{\mapsto}
(0,1,0,x_4-1,1,0)
\overset{\tdmC_{135}}{\mapsto}
(0,1,0,x_4-1,1,0)
\\
&\overset{\tdmC_{126}}{\mapsto}
(1,0,0,x_4-1,1,1)
\overset{\tdnC_{346}}{\mapsto}
(1,0,0,x_4-1,1,1)
.
\end{split}
\label{te super 1 proof 2 case2}
\end{align}
\item
\par
The case $x_1\geq 0$ and $x_4=0$.
\begin{align}
\begin{split}
(x_1,0,0,0,0,0)
&\overset{\tdnC_{346}}{\mapsto}
(x_1,0,0,0,0,0)
\overset{\tdmC_{126}}{\mapsto}
(x_1,0,0,0,0,0)
\\
&\overset{\tdmC_{135}}{\mapsto}
(x_1,0,0,0,0,0)
\overset{\tdnC_{245}}{\mapsto}
(x_1,0,0,0,0,0)
,
\end{split}
\label{te super 1 proof 1 case3}
\\
\begin{split}
(x_1,0,0,0,0,0)
&\overset{\tdnC_{245}}{\mapsto}
(x_1,0,0,0,0,0)
\overset{\tdmC_{135}}{\mapsto}
(x_1,0,0,0,0,0)
\\
&\overset{\tdmC_{126}}{\mapsto}
(x_1,0,0,0,0,0)
\overset{\tdnC_{346}}{\mapsto}
(x_1,0,0,0,0,0)
.
\end{split}
\label{te super 1 proof 2 case3}
\end{align}
\end{enumerate}
They give (\ref{te super 1}) for the case $(0,0,0,0)\in\{0,1\}^4$.
\end{proof}
The maps (\ref{3dm}) and (\ref{3dn}) were obtained by considering the crystal limit of the transition matrices of PBW bases of the nilpotent subalgebra of the quantum superalgebras of type A\cite[Sec 6.1]{Yon20}, so they are natural analogs of the tetrahedron map (\ref{3dr crystal}).
\begin{lemma}
The tetrahedral composite $\tdtC=\tdnC_{245}\tdmC_{135}\tdmC_{126}\tdnC_{346}$ on $\mathbb{Z}_{\geq 0}\times\{0,1\}^2\times\mathbb{Z}_{\geq 0}\times\{0,1\}^2$ satisfies the boundarization condition (\ref{bd cond}).
\end{lemma}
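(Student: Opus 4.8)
The plan is to verify the implication (\ref{bd cond}) directly, by parametrizing a generic element of $Y$ and tracing it through the four maps composing $\tdtC=\tdnC_{245}\tdmC_{135}\tdmC_{126}\tdnC_{346}$. Since components $2,3,5,6$ all lie in $\{0,1\}$, a generic point of $Y$ has the form $(x_1,a,a,x_4,b,b)$ with $x_1,x_4\in\mathbb{Z}_{\geq 0}$ and $a,b\in\{0,1\}$, so the task is to show that its image again has equal second and third entries and equal fifth and sixth entries.

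First I would apply the four maps in order (rightmost first), recording at each stage the single $\min$-value that the relevant map introduces: $w_1=\min(x_4,1-a-b)$ from $\tdnC_{346}$, then $v_2=\min(x_1+a,b+w_1)$ from $\tdmC_{126}$, then $v_3=\min(x_1+2a+w_1-v_2,b)$ from $\tdmC_{135}$, and finally $w_4=\min(x_4-w_1,\,1-v_2-a-w_1-b+v_3)$ from $\tdnC_{245}$. Carrying out these nested substitutions yields an image whose second, third, fifth and sixth entries are $v_2+w_4$, $v_3$, $a+b+w_1-v_3+w_4$ and $a+b+w_1-v_2$, respectively. The pleasant feature is that both membership conditions collapse to a single identity: the second-equals-third condition reads $v_2+w_4=v_3$, while the fifth-equals-sixth condition reads $a+b+w_1-v_3+w_4=a+b+w_1-v_2$, which after cancellation is again $v_2+w_4=v_3$. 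Thus the whole lemma reduces to the one scalar identity $v_2+w_4=v_3$.

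I would then prove $v_2+w_4=v_3$ by exhaustion over the four values of $(a,b)\in\{0,1\}^2$, exactly in the spirit of the proof of (\ref{te super 1}). For each fixed $(a,b)$ the auxiliary quantities simplify sharply; for instance, when $(a,b)=(1,1)$ one finds $w_1=-1$, $v_2=0$, $v_3=1$ and $w_4=1$, so $v_2+w_4=1=v_3$, while the remaining three cases split into a few subcases according to whether $x_1$ and $x_4$ are zero or positive, each of which pins down the $\min$-values and confirms the identity. Some intermediate $\min$-values are negative (such as $w_1=-1$ above), but this is harmless: $\tdnC$ is designed so that such shifts keep the binary components inside $\{0,1\}$, and the well-definedness of the composite on the stated space is already part of $\tdtC$ being a map there.

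The only genuine obstacle is bookkeeping: one must keep the nested substitutions straight across four applications and not confuse the role each coordinate plays under the differing index patterns $346,126,135,245$. Once the output has been written out, however, the collapse to the single identity $v_2+w_4=v_3$ makes the finite verification entirely routine, so I expect no real difficulty beyond organizing the case analysis cleanly.
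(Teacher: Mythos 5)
Your proposal is correct and takes essentially the same route as the paper: the paper likewise proves this lemma by exhaustion over the binary components, reusing the explicit orbit computations (\ref{te super 1 proof 1 case1})--(\ref{te super 1 proof 1 case3}) from its proof of the tetrahedron equation (\ref{te super 1}). Your collapse of both membership conditions to the single scalar identity $v_2+w_4=v_3$ is a tidy organizational refinement (and your formulas for $w_1,v_2,v_3,w_4$ and the output entries all check out), but the underlying argument --- direct computation followed by a finite case analysis on $(a,b)$ and on whether $x_1,x_4$ vanish --- is the same.
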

\begin{proof}
This can be proved by exhaustion.
For example, (\ref{te super 1 proof 1 case1}), (\ref{te super 1 proof 1 case2}) and (\ref{te super 1 proof 1 case3}) verify the case $x_2=x_3=x_5=x_6=0$.
\end{proof}
\par
Let $\tdxC$ denote a map on $\mathbb{Z}_{\geq 0}\times\{0,1\}\times\mathbb{Z}_{\geq 0}\times\{0,1\}$ given by
\begin{align}
\begin{split}
&\tdxC:
(x_1,x_2,x_3,x_4)
\mapsto
\left(
x_1+2x_2+x_3-y_1
,y_1-y_2
,2y_2-y_1
,x_2+x_3+x_4-y_2
\right)
,\\
&y_1=\max(x_3+\min(x_1+2x_2+x_4-1,2x_4),0)
,\\
&y_2=\max(x_3+\min(x_1+x_2+x_4-1,x_4),0)
.
\end{split}
\label{3dx}
\end{align}
The map (\ref{3dx}) was obtained by \cite[Sec 6.1]{Yon20}.
The origin of it is similar to the maps (\ref{3dm}) and (\ref{3dn}): it is obtained by considering the crystal limit of the transition matrix for the quantum superalgebras of type B, so a natural analog of the 3D reflection map (\ref{3dj crystal}).
As shown in \cite{Yon20}, the map (\ref{3dx}) is involutive.
\par
We have a super analog of the relation which mentioned in Remark \ref{Lus obs rem} by exhaustion:
\begin{proposition}
The boundarization associated with the tetrahedral composite $\tdtC=\tdnC_{245}\tdmC_{135}\tdmC_{126}\tdnC_{346}$ is given by $\tdxC$.
\end{proposition}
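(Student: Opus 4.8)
The plan is to unwind the definition of boundarization directly and then verify that the resulting map coincides with $\tdxC$ by a finite case analysis. By the inhomogeneous analog of (\ref{bd eq}), the boundarization associated with $\tdtC$ sends $(x_1,x_2,x_3,x_4)$ to $\varphi(\tdtC(\phi(x_1,x_2,x_3,x_4)))$, and the preceding lemma guarantees that $\tdtC$ maps $Y$ into $Y$, so that $\varphi$ may legitimately be applied to the output. Starting from $\phi(x_1,x_2,x_3,x_4)=(x_1,x_2,x_2,x_3,x_4,x_4)$, where $x_1,x_3\in\mathbb{Z}_{\geq 0}$ occupy the integer spaces $1,4$ and $x_2,x_4\in\{0,1\}$ occupy the binary spaces $2,3$ and $5,6$, I would apply the four constituent maps in the order $\tdnC_{346}$, $\tdmC_{126}$, $\tdmC_{135}$, $\tdnC_{245}$ prescribed by $\tdtC=\tdnC_{245}\tdmC_{135}\tdmC_{126}\tdnC_{346}$, recording at each step the value of the $\min$ appearing in (\ref{3dm}) or (\ref{3dn}). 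One checks along the way that the index assignments are type-consistent: $\tdnC_{346}$, $\tdnC_{245}$ act on a (binary, integer, binary) triple and $\tdmC_{126}$, $\tdmC_{135}$ on an (integer, binary, binary) triple, matching the domains of $\tdnC$ and $\tdmC$.

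The key simplification is that every binary component stays in $\{0,1\}$ throughout the chain, so the quantities $v=\min(x_1+x_2,x_3)$ of $\tdmC$ and $w=\min(x_2,1-x_1-x_3)$ of $\tdnC$ take only finitely many branches once $x_2,x_4$ are fixed. Accordingly I would exhaust over the four choices $(x_2,x_4)\in\{0,1\}^2$, and within each branch split further according to whether the integer inputs $x_1,x_3$ (and the intermediate integer values they produce) are zero or positive — precisely the kind of splitting already used in the proofs of (\ref{te super 1}) and of boundarizability. In each resulting case the composite is an explicit integer-linear expression, and applying $\varphi$ collapses the duplicated binary outputs in spaces $2,3$ and $5,6$ back to a single quadruple.

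Finally I would compare the collected expressions against the closed form (\ref{3dx}). The only genuine check is that the composed sequence of nested $\min$'s reassembles into the two globally defined quantities $y_1=\max(x_3+\min(x_1+2x_2+x_4-1,2x_4),0)$ and $y_2=\max(x_3+\min(x_1+x_2+x_4-1,x_4),0)$; the outer $\max(\,\cdot\,,0)$ is exactly what enforces nonnegativity of the integer outputs and records the branch in which a $\min$ inside $\tdnC$ gets clipped. I expect the main obstacle to be purely organizational: keeping the four binary slots mutually consistent through the four-fold composition, and confirming that the branch boundaries produced by the iterated $\min$'s coincide with those encoded by the single $\max(\,\cdot\,,0)$'s in $y_1,y_2$, so that the uniform formula (\ref{3dx}) is recovered across all cases. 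Since no essentially new phenomenon arises beyond this bookkeeping, the verification is routine once the exhaustion is set up, which is why the statement is asserted to hold \emph{by exhaustion}.
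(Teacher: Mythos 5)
Your proposal matches the paper's own treatment: the paper justifies this proposition simply ``by exhaustion,'' i.e.\ by directly computing $\varphi(\tdtC(\phi(\mathbf{x})))$ through the four constituent maps and checking the finitely many branches determined by the binary slots and the zero/positive splitting of the integer slots, exactly as you outline. Your type-consistency check of the index assignments and the identification of where the $\max(\,\cdot\,,0)$ clipping in $y_1,y_2$ arises are correct, so the proposal is sound and essentially identical in approach to the paper.
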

\par
Then, we present an inhomogeneous extension of Lemma \ref{R20 lemma}.
\begin{lemma}\label{R20 super lemma}
\begin{align}
\begin{split}
&(\tdmC_{489}
\tdmC_{\bar{4}\bar{8}\bar{9}})
(\tdnC_{579}\tdmC_{3\bar{5}9}\tdmC_{35\bar{9}}\tdnC_{\bar{5}7\bar{9}})
(\tdmC_{26\bar{9}}
\tdmC_{\bar{2}\bar{6}9})
(\tdmC_{2\bar{5}8}
\tdmC_{\bar{2}5\bar{8}})
\\
&\spaceDb
\times
(\tdnC_{\bar{6}7\bar{8}}\tdmC_{16\bar{8}}\tdmC_{1\bar{6}8}\tdnC_{678})
(\tdrC_{234}\tdrC_{1\bar{2}4}\tdrC_{12\bar{4}}\tdrC_{\bar{2}3\bar{4}})
(\tdmC_{\bar{4}\bar{5}\bar{6}}
\tdmC_{456})
\\
&=
(\tdmC_{456}
\tdmC_{\bar{4}\bar{5}\bar{6}})
(\tdrC_{234}\tdrC_{1\bar{2}4}\tdrC_{12\bar{4}}\tdrC_{\bar{2}3\bar{4}})
(\tdnC_{\bar{6}7\bar{8}}\tdmC_{16\bar{8}}\tdmC_{1\bar{6}8}\tdnC_{678})
\\
&\spaceDb
\times
(\tdmC_{\bar{2}5\bar{8}}
\tdmC_{2\bar{5}8})
(\tdmC_{\bar{2}\bar{6}9}
\tdmC_{26\bar{9}})
(\tdnC_{579}\tdmC_{3\bar{5}9}\tdmC_{35\bar{9}}\tdnC_{\bar{5}7\bar{9}})
(\tdmC_{\bar{4}\bar{8}\bar{9}}
\tdmC_{489})
,
\end{split}
\label{R20 super eq}
\end{align}
where $\tdrC$ is given by (\ref{3dr crystal}).
\end{lemma}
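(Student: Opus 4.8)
The plan is to prove (\ref{R20 super eq}) by transporting the proof of Lemma \ref{R20 lemma} to the inhomogeneous setting. The point is that (\ref{R20 super eq}) has exactly the same index structure as (\ref{R20 eq}); only the labels $\tdmC$, $\tdnC$, $\tdrC$ attached to the vertices differ. So I would first fix the type ($\mathbb{Z}_{\geq 0}$ or $\{0,1\}$) of each of the fifteen spaces. These are read off unambiguously from the domains of the maps occurring in (\ref{R20 super eq}), with each barred space inheriting the type of its unbarred copy; one finds that $1,2,\bar{2},3,4,\bar{4},7$ carry $\mathbb{Z}_{\geq 0}$ while $5,\bar{5},6,\bar{6},8,\bar{8},9,\bar{9}$ carry $\{0,1\}$. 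Since every vertex acts on a fixed triple of spaces, fixing the space types determines the type of every vertex in every intermediate expression of the derivation.

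Next I would replay the chain of moves recorded in Appendix \ref{app R20}, reinterpreting each step through these type labels. Every application of the homogeneous tetrahedron equation (\ref{te}) gets replaced by the inhomogeneous identity matching the types of the six spaces involved: a configuration whose $\mathbb{Z}_{\geq 0}$-spaces sit in the pattern of (\ref{te super 1}) is governed by (\ref{te super 1}), an all-$\mathbb{Z}_{\geq 0}$ configuration by the $\tdrC$-instance of (\ref{te}), and a configuration of the pattern of (\ref{te super 2}) by (\ref{te super 2}). Part of the obstacle is to confirm that every elementary move of the homogeneous proof presents one of these admissible type patterns, so that a defined map sits at each vertex and a matching inhomogeneous equation is always available. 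The involutivity moves transfer verbatim, since $\tdmC$, $\tdnC$, and $\tdrC$ are all involutive.

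The genuinely new point --- and the crux --- concerns symmetry. In the homogeneous derivation every vertex is symmetric, so symmetry, together with the ``our''-to-``usual'' form conversions of the type (\ref{te our usual}) that it underlies, may be used freely; but here $\tdnC$ and $\tdrC$ are symmetric while $\tdmC$ is \emph{not}. I would therefore track each vertex through the entire derivation and verify that every symmetry move, and every form conversion resting on it, is applied only to a $\tdnC$ or $\tdrC$ vertex and never to a $\tdmC$ vertex. This is exactly the phenomenon flagged in the remark after Lemma \ref{R20 lemma}, and checking it is the main labor: it is a bookkeeping pass over the whole chain of moves, and its success is precisely what certifies that the assignment of maps in (\ref{R20 super eq}) is the correct one. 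Once it is verified, the homogeneous derivation goes through line by line and produces (\ref{R20 super eq}).

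Finally, as a cross-check one could bypass the derivation entirely: because all the maps are piecewise-linear and act on products with only finitely many $\{0,1\}$ factors, (\ref{R20 super eq}) may be verified by exhaustion over the values of the $\{0,1\}$ components while carrying the $\mathbb{Z}_{\geq 0}$ components symbolically, in the same spirit as the proof of (\ref{te super 1}) and of the boundarization conditions elsewhere in this section.
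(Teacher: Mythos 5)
Your proposal is correct and is essentially the paper's own proof: the paper establishes Lemma \ref{R20 super lemma} by replaying the derivation of Lemma \ref{R20 lemma}, carrying out in Appendix \ref{app R20 super} exactly the typed line-by-line transcription you describe, with each elementary move realized as an instance of (\ref{te super 1}), (\ref{te super 2}), or the $\tdrC$-version of (\ref{te}). In particular, your crux point is also the paper's: its proof explicitly notes that the symmetry of $\tdmC$ is never needed, with symmetry moves falling only on $\tdnC$ and $\tdrC$ vertices.
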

\begin{proof}
This is shown in the same way as Lemma \ref{R20 lemma}.
The detail of the calculation is available in Appendix \ref{app R20 super}, where we put the underlines to the parts to be got together or deformed.
As we can see by Appendix \ref{app R20 super}, we do not need the symmetry for $\tdmC$.
\end{proof}
\par
By using the above lemmas, proposition and the properties for $(\tdrC,\tdjC)$, we obtain the following solution to the 3D reflection equation:
\begin{theorem}\label{main thm super}
\begin{align}
\tdmC_{489}
\tdxC_{3579}
\tdmC_{269}
\tdmC_{258}
\tdxC_{1678}
\tdjC_{1234}
\tdmC_{456}
=
\tdmC_{456}
\tdjC_{1234}
\tdxC_{1678}
\tdmC_{258}
\tdmC_{269}
\tdxC_{3579}
\tdmC_{489}
.
\label{my tre crys}
\end{align}
\end{theorem}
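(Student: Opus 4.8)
The plan is to replay the proof of Theorem \ref{main thm} verbatim, now starting from the inhomogeneous identity (\ref{R20 super eq}) of Lemma \ref{R20 super lemma} in place of (\ref{R20 eq}). The one structural novelty is that two distinct boundarizations appear: the composite built from $\tdmC$ and $\tdnC$ boundarizes to $\tdxC$, while the purely $\tdrC$ composite boundarizes to $\tdjC$. Both sets of boundarization data are already supplied by earlier results — the condition (\ref{bd cond}) for $\tdtC=\tdnC_{245}\tdmC_{135}\tdmC_{126}\tdnC_{346}$ and the fact that its boundarization is $\tdxC$ are the lemma and proposition immediately preceding Lemma \ref{R20 super lemma}, while the condition (\ref{bd cond}) for the $\tdrC$ composite together with the identification of its boundarization as $\tdjC$ is exactly Theorem \ref{main thm} applied to $\tdrC$, as recorded in Section \ref{sec 31}.

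First I would rewrite (\ref{R20 super eq}) in terms of tetrahedral composites, exactly as (\ref{R20 eq}) was recast as (\ref{R20 eq tetra}). The two bracketed blocks $(\tdnC_{579}\tdmC_{3\bar{5}9}\tdmC_{35\bar{9}}\tdnC_{\bar{5}7\bar{9}})$ and $(\tdnC_{\bar{6}7\bar{8}}\tdmC_{16\bar{8}}\tdmC_{1\bar{6}8}\tdnC_{678})$ collapse to the $\tdtC$-composites $\tdtC_{35\bar{5}79\bar{9}}$ and $\tdtC_{16\bar{6}78\bar{8}}$, while the block $(\tdrC_{234}\tdrC_{1\bar{2}4}\tdrC_{12\bar{4}}\tdrC_{\bar{2}3\bar{4}})$ collapses to the $\tdrC$-composite, which I denote $\tdtC'_{12\bar{2}34\bar{4}}$, under the same ordering $1\times 2\times \bar{2}\times \cdots \times 9\times \bar{9}$ used in the proof of Theorem \ref{main thm}.

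Next I would act both sides on the diagonal input $(x_1,x_2,x_2,x_3,x_4,x_4,x_5,x_5,x_6,x_6,x_7,x_8,x_8,x_9,x_9)$. By the two boundarization conditions each composite receives an element of its respective subspace $Y$ and, crucially, returns one, so the cutting-and-reconnection substitution (\ref{cut and reconnect}) may be performed without altering any output. Applying it with $\tdtC$-composites replaced by $P\,\tdxC\,\tdxC$ and the $\tdrC$-composite by $P\,\tdjC\,\tdjC$, and then cancelling the transpositions $P_{i\bar{i}}$ exactly as in passing from (\ref{R20 eq tetra cutted}) to (\ref{R20 eq tetra cutted rewrite}), produces the direct product of (\ref{my tre crys}) with its barred copy; reading off the unbarred factor gives (\ref{my tre crys}). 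Note that the $\tdj$-slots $3579$ and $1678$ receive $\tdxC$ while the $\tdj$-slot $1234$ receives $\tdjC$, matching how the two composite types are distributed in (\ref{R20 super eq}).

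The place where this argument genuinely departs from Theorem \ref{main thm} is that symmetry is unavailable for $\tdmC$, so the symmetric form of the tetrahedron equation cannot be invoked when assembling the composites; this is precisely why Lemma \ref{R20 super lemma} must be proved directly, and — as its proof remarks — it does go through without the symmetry of $\tdmC$. Once that identity is in hand, the rest is purely formal, using only involutivity and the two boundarization conditions. The remaining point to verify is bookkeeping for the inhomogeneous setting: the underlying sets differ across spaces ($\mathbb{Z}_{\geq 0}$ versus $\{0,1\}$), so one checks that the diagonal input lies in each composite's $Y$ and that each reconnection $P_{i\bar{i}}$ pairs copies of the same set. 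Both are immediate from the explicit domains of $\tdmC$, $\tdnC$ and $\tdrC$, so I expect no real obstacle beyond the careful index tracking already established in Lemma \ref{R20 super lemma}.
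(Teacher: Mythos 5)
Your proposal is correct and takes essentially the same approach as the paper, whose proof of this theorem is literally the one-line remark that it goes ``in the same way as Theorem \ref{main thm}'': you rerun the cutting-and-reconnection argument on the inhomogeneous identity (\ref{R20 super eq}), with the $\tdnC$/$\tdmC$-blocks boundarizing to $\tdxC$ and the $\tdrC$-block to $\tdjC$, exactly as the paper intends. The only slight imprecision is attributing the boundarizability of $\tdrC$ and the identification of its boundarization with $\tdjC$ to ``Theorem \ref{main thm} applied to $\tdrC$'' --- those facts are inputs to, not consequences of, that theorem, and are instead recorded in Section \ref{sec 31} --- but this does not affect the validity of your argument.
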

\begin{proof}
This is shown in the same way as Theorem \ref{main thm}.
\end{proof}
Note that the solution (\ref{my tre crys}) has already obtained\cite[(6.33)]{Yon20}, but our construction gives an alternative derivation for this solution.
\subsection{Two-component solution associated with soliton equation}\label{sec 34}
Here, we assume all variables are generic.
We set $\tdr:(\mathbb{C}^2)^3\to(\mathbb{C}^2)^3$ by
\begin{align}
\tdr:
\left(
\begin{pmatrix}
x_1
\\
y_1
\end{pmatrix}
,
\begin{pmatrix}
x_2
\\
y_2
\end{pmatrix}
,
\begin{pmatrix}
x_3
\\
y_3
\end{pmatrix}
\right)
\mapsto
\left(
\begin{pmatrix}
\displaystyle
\frac{x_1x_2}{x_1+x_3}
\\
\\
\displaystyle
\frac{(x_1+x_3)y_1y_2}{x_1y_1+x_3y_3}
\end{pmatrix}
,
\begin{pmatrix}
\displaystyle
x_1+x_3
\\
\\
\displaystyle
\frac{x_1y_1+x_3y_3}{x_1+x_3}
\end{pmatrix}
,
\begin{pmatrix}
\displaystyle
\frac{x_2x_3}{x_1+x_3}
\\
\\
\displaystyle
\frac{(x_1+x_3)y_2y_3}{x_1y_1+x_3y_3}
\end{pmatrix}
\right)
.
\label{3dr vec}
\end{align}
Then, the map (\ref{3dr vec}) is the involutive and symmetric tetrahedron map.
This solution was obtained\cite[(93)]{KNPT19} associated with discrete modified KP equation, which appears in ABS classification as $\chi_4$\cite{ABS12}.
This solution is a two-component generalization of (\ref{3dr}) because this gives (\ref{3dr}) when we set $y_i=1{\ }(i=1,2,3)$.
\par
We can verify $\tdr$ satisfies (\ref{bd cond}) by direct calculations, so it is boundarizable.
The associated boundarization $\tdj:(\mathbb{C}^2)^4\to (\mathbb{C}^2)^4$ is explicitly obtained as follows:
\begin{align}
\begin{split}
&\tdj:
\left(
\begin{pmatrix}
x_1
\\
y_1
\end{pmatrix}
,
\begin{pmatrix}
x_2
\\
y_2
\end{pmatrix}
,
\begin{pmatrix}
x_3
\\
y_3
\end{pmatrix}
,
\begin{pmatrix}
x_4
\\
y_4
\end{pmatrix}
\right)
\mapsto
\left(
\begin{pmatrix}
\displaystyle
\frac{x_1x_2^2x_3}{z_1}
\\
\\
\displaystyle
\frac{y_1y_2^2y_3z_1}{w_1}
\end{pmatrix}
,
\begin{pmatrix}
\displaystyle
\frac{z_1}{z_2}
\\
\\
\displaystyle
\frac{z_2w_1}{z_1w_2}
\end{pmatrix}
,
\begin{pmatrix}
\displaystyle
\frac{z_2^2}{z_1}
\\
\\
\displaystyle
\frac{z_1w_2^2}{z_2^2w_1}
\end{pmatrix}
,
\begin{pmatrix}
\displaystyle
\frac{x_2x_3x_4}{z_2}
\\
\\
\displaystyle
\frac{y_2y_3y_4z_2}{w_2}
\end{pmatrix}
\right)
,\\
&z_1=x_1(x_2+x_4)^2+x_3x_4^2
,\quad
z_2=x_1(x_2+x_4)+x_3x_4
,\\
&w_1=x_1y_1(x_2y_2+x_4y_4)^2+x_3x_4^2y_3y_4^2
,\quad
w_2=x_1y_1(x_2y_2+x_4y_4)+x_3x_4y_3y_4
.
\end{split}
\label{3dj vec}
\end{align}
Similarly to (\ref{3dr vec}), this gives (\ref{3dj}) when we set $y_i=1{\ }(i=1,2,3,4)$.
By Theorem \ref{main thm}, $(\tdr,\tdj)$ gives a new solution to the 3D reflection equation, which is a two-component generalization of one of Section \ref{sec 31}:
\begin{proposition}
$(\tdr,\tdj)$ given by (\ref{3dr vec}) and (\ref{3dj vec}) satisfies the 3D reflection equation (\ref{tre}).
\end{proposition}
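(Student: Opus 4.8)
The plan is to invoke Theorem \ref{main thm}, which reduces the claim to verifying that the map $\tdr$ in (\ref{3dr vec}) is an involutive, symmetric, and boundarizable tetrahedron map, and that its boundarization coincides with the explicit map $\tdj$ recorded in (\ref{3dj vec}). The tetrahedron equation (\ref{te}) for (\ref{3dr vec}), as well as its involutivity and symmetry, are already stated in the text preceding the proposition (the tetrahedron property following \cite{KNPT19}), so every hypothesis of Theorem \ref{main thm} except boundarizability is in hand.

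First I would confirm symmetry and involutivity by direct substitution. Symmetry $\tdr_{123}=\tdr_{321}$ follows from noting that under the simultaneous reversal $(x_1,y_1)\leftrightarrow(x_3,y_3)$ the first and third output slots of (\ref{3dr vec}) interchange while the middle slot is fixed; involutivity $\tdr^2=\mathrm{id}$ is a short computation using that the quantities $x_1+x_3$ and $x_1y_1+x_3y_3$ are recovered by applying the map. Next I would establish boundarizability, i.e. condition (\ref{bd cond}) for the tetrahedral composite $\tdt=\tdr_{245}\tdr_{135}\tdr_{126}\tdr_{346}$: acting $\tdt$ on a point of $Y$ (so $x_2=x_3$, $x_5=x_6$, and likewise for the $y$-components), I would track the rational expressions produced at the four elementary steps and check that the outputs again satisfy the two equalities defining $Y$. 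In parallel I would evaluate $\tdj=\varphi\circ\tdt\circ\phi$ from (\ref{bd eq}) and confirm that it simplifies to the closed form (\ref{3dj vec}), with $z_1,z_2$ governing the $x$-components and $w_1,w_2$ the coupled $y$-components. With all hypotheses verified, Theorem \ref{main thm} then yields the 3D reflection equation (\ref{tre}) for the pair $(\tdr,\tdj)$.

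The main obstacle is the boundarizability calculation together with the explicit evaluation of $\tdt$ on $Y$: because the two components $x_i$ and $y_i$ are coupled through denominators such as $x_1y_1+x_3y_3$, the intermediate rational functions are considerably heavier than in the scalar case of Section \ref{sec 31}, and some care is needed to see the cancellations producing the compact numerators $z_1,z_2,w_1,w_2$. Once this bookkeeping is completed, no idea beyond Theorem \ref{main thm} is required, and the scalar specialization $y_i=1$ gives a convenient consistency check against (\ref{3dj}).
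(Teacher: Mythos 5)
Your proposal is correct and follows exactly the paper's own route: the paper likewise verifies that the two-component map (\ref{3dr vec}) is involutive, symmetric, and boundarizable by direct calculation, identifies its boundarization $\varphi\circ\tdt\circ\phi$ with the explicit formula (\ref{3dj vec}), and then concludes via Theorem \ref{main thm}. The points you flag as the main labor (checking condition (\ref{bd cond}) for the coupled rational expressions and simplifying to $z_1,z_2,w_1,w_2$) are precisely the "direct calculations" the paper asserts, so there is no gap.
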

\vspace{0mm}
\appendix
\begin{landscape}
\section{Proof of Lemma \ref{R20 lemma}}\label{app R20}
\begin{align}
&
\tdr_{489}
\tdr_{\bar{4}\bar{8}\bar{9}}
\underline{
\tdr_{579}\tdr_{3\bar{5}9}\tdr_{35\bar{9}}\tdr_{\bar{5}7\bar{9}}
}
\tdr_{26\bar{9}}
\tdr_{\bar{2}\bar{6}9}
\tdr_{2\bar{5}8}
\tdr_{\bar{2}5\bar{8}}
\tdr_{\bar{6}7\bar{8}}
\tdr_{16\bar{8}}\tdr_{1\bar{6}8}\tdr_{678}
\tdr_{234}\tdr_{1\bar{2}4}\tdr_{12\bar{4}}\tdr_{\bar{2}3\bar{4}}
\tdr_{\bar{4}\bar{5}\bar{6}}
\tdr_{456}
\\
&=
\tdr_{489}
\tdr_{\bar{4}\bar{8}\bar{9}}
\tdr_{\bar{5}7\bar{9}}\tdr_{35\bar{9}}\tdr_{3\bar{5}9}
\underline{\tdr_{579}}%
\tdr_{26\bar{9}}
\underline{\tdr_{\bar{2}\bar{6}9}}%
\tdr_{2\bar{5}8}
\underline{\tdr_{\bar{2}5\bar{8}}}%
{\ }
\underline{\tdr_{\bar{6}7\bar{8}}}%
\tdr_{16\bar{8}}\tdr_{1\bar{6}8}\tdr_{678}
\tdr_{234}\tdr_{1\bar{2}4}\tdr_{12\bar{4}}\tdr_{\bar{2}3\bar{4}}
\tdr_{\bar{4}\bar{5}\bar{6}}
\tdr_{456}
\\
&=
\tdr_{489}
\tdr_{\bar{4}\bar{8}\bar{9}}
\tdr_{\bar{5}7\bar{9}}\tdr_{35\bar{9}}\tdr_{3\bar{5}9}
\tdr_{26\bar{9}}
\tdr_{2\bar{5}8}
\underline{
\tdr_{579}
\tdr_{\bar{2}\bar{6}9}
\tdr_{\bar{2}5\bar{8}}
\tdr_{\bar{6}7\bar{8}}
}%
\tdr_{16\bar{8}}\tdr_{1\bar{6}8}\tdr_{678}
\tdr_{234}\tdr_{1\bar{2}4}\tdr_{12\bar{4}}\tdr_{\bar{2}3\bar{4}}
\tdr_{\bar{4}\bar{5}\bar{6}}
\tdr_{456}
\\
&=
\underline{\tdr_{489}}%
\tdr_{\bar{4}\bar{8}\bar{9}}
\tdr_{\bar{5}7\bar{9}}\tdr_{35\bar{9}}
\underline{\tdr_{3\bar{5}9}}%
\tdr_{26\bar{9}}
\underline{\tdr_{2\bar{5}8}}%
\tdr_{\bar{6}7\bar{8}}
\tdr_{\bar{2}5\bar{8}}
\tdr_{\bar{2}\bar{6}9}
\tdr_{579}
\tdr_{16\bar{8}}\tdr_{1\bar{6}8}\tdr_{678}
\underline{\tdr_{234}}%
\tdr_{1\bar{2}4}\tdr_{12\bar{4}}\tdr_{\bar{2}3\bar{4}}
\tdr_{\bar{4}\bar{5}\bar{6}}
\tdr_{456}
\\
&=
\tdr_{\bar{4}\bar{8}\bar{9}}
\tdr_{\bar{5}7\bar{9}}\tdr_{35\bar{9}}
\tdr_{26\bar{9}}
\tdr_{\bar{6}7\bar{8}}
\tdr_{\bar{2}5\bar{8}}
\underline{
\tdr_{489}
\tdr_{3\bar{5}9}
\tdr_{2\bar{5}8}
\tdr_{234}
}
\tdr_{\bar{2}\bar{6}9}
\tdr_{579}
\tdr_{16\bar{8}}\tdr_{1\bar{6}8}\tdr_{678}
\tdr_{1\bar{2}4}\tdr_{12\bar{4}}\tdr_{\bar{2}3\bar{4}}
\tdr_{\bar{4}\bar{5}\bar{6}}
\tdr_{456}
\\
&=
\tdr_{\bar{4}\bar{8}\bar{9}}
\tdr_{\bar{5}7\bar{9}}\tdr_{35\bar{9}}
\tdr_{26\bar{9}}
\tdr_{\bar{6}7\bar{8}}
\tdr_{\bar{2}5\bar{8}}
\tdr_{234}
\tdr_{2\bar{5}8}
\tdr_{3\bar{5}9}
\underline{\tdr_{489}}%
{\ }
\underline{\tdr_{\bar{2}\bar{6}9}}%
\tdr_{579}
\tdr_{16\bar{8}}
\underline{\tdr_{1\bar{6}8}}%
\tdr_{678}
\underline{\tdr_{1\bar{2}4}}%
\tdr_{12\bar{4}}\tdr_{\bar{2}3\bar{4}}
\tdr_{\bar{4}\bar{5}\bar{6}}
\tdr_{456}
\\
&=
\tdr_{\bar{4}\bar{8}\bar{9}}
\tdr_{\bar{5}7\bar{9}}\tdr_{35\bar{9}}
\tdr_{26\bar{9}}
\tdr_{\bar{6}7\bar{8}}
\tdr_{\bar{2}5\bar{8}}
\tdr_{234}
\tdr_{2\bar{5}8}
\tdr_{3\bar{5}9}
\tdr_{16\bar{8}}
\underline{
\tdr_{489}
\tdr_{\bar{2}\bar{6}9}
\tdr_{1\bar{6}8}
\tdr_{1\bar{2}4}
}
\tdr_{579}
\tdr_{678}
\tdr_{12\bar{4}}\tdr_{\bar{2}3\bar{4}}
\tdr_{\bar{4}\bar{5}\bar{6}}
\tdr_{456}
\\
&=
\tdr_{\bar{4}\bar{8}\bar{9}}
\tdr_{\bar{5}7\bar{9}}\tdr_{35\bar{9}}
\tdr_{26\bar{9}}
\tdr_{\bar{6}7\bar{8}}
\tdr_{\bar{2}5\bar{8}}
\tdr_{234}
\tdr_{2\bar{5}8}
\tdr_{3\bar{5}9}
\tdr_{16\bar{8}}
\tdr_{1\bar{2}4}
\tdr_{1\bar{6}8}
\tdr_{\bar{2}\bar{6}9}
\underline{\tdr_{489}}%
{\ }
\underline{\tdr_{579}}%
{\ }
\underline{\tdr_{678}}%
\tdr_{12\bar{4}}\tdr_{\bar{2}3\bar{4}}
\tdr_{\bar{4}\bar{5}\bar{6}}
\underline{\tdr_{456}}
\\
&=
\tdr_{\bar{4}\bar{8}\bar{9}}
\tdr_{\bar{5}7\bar{9}}\tdr_{35\bar{9}}
\tdr_{26\bar{9}}
\tdr_{\bar{6}7\bar{8}}
\tdr_{\bar{2}5\bar{8}}
\tdr_{234}
\tdr_{2\bar{5}8}
\tdr_{3\bar{5}9}
\tdr_{16\bar{8}}
\tdr_{1\bar{2}4}
\tdr_{1\bar{6}8}
\tdr_{\bar{2}\bar{6}9}
\tdr_{12\bar{4}}\tdr_{\bar{2}3\bar{4}}
\tdr_{\bar{4}\bar{5}\bar{6}}
\underline{
\tdr_{489}
\tdr_{579}
\tdr_{678}
\tdr_{456}
}
\\
&=
\tdr_{\bar{4}\bar{8}\bar{9}}
\tdr_{\bar{5}7\bar{9}}\tdr_{35\bar{9}}
\tdr_{26\bar{9}}
\tdr_{\bar{6}7\bar{8}}
\tdr_{\bar{2}5\bar{8}}
\tdr_{234}
\tdr_{2\bar{5}8}
\underline{\tdr_{3\bar{5}9}}%
\tdr_{16\bar{8}}
\tdr_{1\bar{2}4}
\tdr_{1\bar{6}8}
\underline{\tdr_{\bar{2}\bar{6}9}}%
\tdr_{12\bar{4}}
\underline{\tdr_{\bar{2}3\bar{4}}}%
{\ }
\underline{\tdr_{\bar{4}\bar{5}\bar{6}}}%
\tdr_{456}
\tdr_{678}
\tdr_{579}
\tdr_{489}
\\
&=
\tdr_{\bar{4}\bar{8}\bar{9}}
\tdr_{\bar{5}7\bar{9}}\tdr_{35\bar{9}}
\tdr_{26\bar{9}}
\tdr_{\bar{6}7\bar{8}}
\tdr_{\bar{2}5\bar{8}}
\tdr_{234}
\tdr_{2\bar{5}8}
\tdr_{16\bar{8}}
\tdr_{1\bar{2}4}
\tdr_{1\bar{6}8}
\tdr_{12\bar{4}}
\underline{
\tdr_{3\bar{5}9}
\tdr_{\bar{2}\bar{6}9}
\tdr_{\bar{2}3\bar{4}}
\tdr_{\bar{4}\bar{5}\bar{6}}
}
\tdr_{456}
\tdr_{678}
\tdr_{579}
\tdr_{489}
\\
&=
\tdr_{\bar{4}\bar{8}\bar{9}}
\tdr_{\bar{5}7\bar{9}}\tdr_{35\bar{9}}
\tdr_{26\bar{9}}
\tdr_{\bar{6}7\bar{8}}
\tdr_{\bar{2}5\bar{8}}
\tdr_{234}
\underline{\tdr_{2\bar{5}8}}%
\tdr_{16\bar{8}}
\tdr_{1\bar{2}4}
\underline{\tdr_{1\bar{6}8}}%
{\ }
\underline{\tdr_{12\bar{4}}}%
{\ }
\underline{\tdr_{\bar{4}\bar{5}\bar{6}}}%
\tdr_{\bar{2}3\bar{4}}
\tdr_{\bar{2}\bar{6}9}
\tdr_{3\bar{5}9}
\tdr_{456}
\tdr_{678}
\tdr_{579}
\tdr_{489}
\\
&=
\tdr_{\bar{4}\bar{8}\bar{9}}
\tdr_{\bar{5}7\bar{9}}\tdr_{35\bar{9}}
\tdr_{26\bar{9}}
\tdr_{\bar{6}7\bar{8}}
\tdr_{\bar{2}5\bar{8}}
\tdr_{234}
\tdr_{16\bar{8}}
\tdr_{1\bar{2}4}
\underline{
\tdr_{2\bar{5}8}
\tdr_{1\bar{6}8}
\tdr_{12\bar{4}}
\tdr_{\bar{4}\bar{5}\bar{6}}
}
\tdr_{\bar{2}3\bar{4}}
\tdr_{\bar{2}\bar{6}9}
\tdr_{3\bar{5}9}
\tdr_{456}
\tdr_{678}
\tdr_{579}
\tdr_{489}
\\
&=
\underline{\tdr_{\bar{4}\bar{8}\bar{9}}}%
{\ }
\underline{\tdr_{\bar{5}7\bar{9}}}%
\tdr_{35\bar{9}}
\tdr_{26\bar{9}}
\underline{\tdr_{\bar{6}7\bar{8}}}%
\tdr_{\bar{2}5\bar{8}}
\tdr_{234}
\tdr_{16\bar{8}}
\tdr_{1\bar{2}4}
\underline{\tdr_{\bar{4}\bar{5}\bar{6}}}%
\tdr_{12\bar{4}}
\tdr_{1\bar{6}8}
\tdr_{2\bar{5}8}
\tdr_{\bar{2}3\bar{4}}
\tdr_{\bar{2}\bar{6}9}
\tdr_{3\bar{5}9}
\tdr_{456}
\tdr_{678}
\tdr_{579}
\tdr_{489}
\\
&=
\underline{
\tdr_{\bar{4}\bar{8}\bar{9}}
\tdr_{\bar{5}7\bar{9}}
\tdr_{\bar{6}7\bar{8}}
\tdr_{\bar{4}\bar{5}\bar{6}}
}
\tdr_{35\bar{9}}
\tdr_{26\bar{9}}
\tdr_{\bar{2}5\bar{8}}
\tdr_{234}
\tdr_{16\bar{8}}
\tdr_{1\bar{2}4}
\tdr_{12\bar{4}}
\tdr_{1\bar{6}8}
\tdr_{2\bar{5}8}
\tdr_{\bar{2}3\bar{4}}
\tdr_{\bar{2}\bar{6}9}
\tdr_{3\bar{5}9}
\tdr_{456}
\tdr_{678}
\tdr_{579}
\tdr_{489}
\\
&=
\tdr_{\bar{4}\bar{5}\bar{6}}
\tdr_{\bar{6}7\bar{8}}
\tdr_{\bar{5}7\bar{9}}
\tdr_{\bar{4}\bar{8}\bar{9}}
\tdr_{35\bar{9}}
\tdr_{26\bar{9}}
\underline{\tdr_{\bar{2}5\bar{8}}}%
\tdr_{234}
\underline{\tdr_{16\bar{8}}}%
{\ }
\underline{\tdr_{1\bar{2}4}}%
\tdr_{12\bar{4}}
\tdr_{1\bar{6}8}
\tdr_{2\bar{5}8}
\tdr_{\bar{2}3\bar{4}}
\tdr_{\bar{2}\bar{6}9}
\tdr_{3\bar{5}9}
\underline{\tdr_{456}}%
\tdr_{678}
\tdr_{579}
\tdr_{489}
\\
&=
\tdr_{\bar{4}\bar{5}\bar{6}}
\tdr_{\bar{6}7\bar{8}}
\tdr_{\bar{5}7\bar{9}}
\tdr_{\bar{4}\bar{8}\bar{9}}
\tdr_{35\bar{9}}
\tdr_{26\bar{9}}
\tdr_{234}
\underline{
\tdr_{\bar{2}5\bar{8}}
\tdr_{16\bar{8}}
\tdr_{1\bar{2}4}
\tdr_{456}
}
\tdr_{12\bar{4}}
\tdr_{1\bar{6}8}
\tdr_{2\bar{5}8}
\tdr_{\bar{2}3\bar{4}}
\tdr_{\bar{2}\bar{6}9}
\tdr_{3\bar{5}9}
\tdr_{678}
\tdr_{579}
\tdr_{489}
\\
&=
\tdr_{\bar{4}\bar{5}\bar{6}}
\tdr_{\bar{6}7\bar{8}}
\tdr_{\bar{5}7\bar{9}}
\tdr_{\bar{4}\bar{8}\bar{9}}
\underline{
\tdr_{35\bar{9}}
\tdr_{26\bar{9}}
\tdr_{234}
\tdr_{456}
}
\tdr_{1\bar{2}4}
\tdr_{16\bar{8}}
\tdr_{\bar{2}5\bar{8}}
\tdr_{12\bar{4}}
\tdr_{1\bar{6}8}
\tdr_{2\bar{5}8}
\tdr_{\bar{2}3\bar{4}}
\tdr_{\bar{2}\bar{6}9}
\tdr_{3\bar{5}9}
\tdr_{678}
\tdr_{579}
\tdr_{489}
\\
&=
\tdr_{\bar{4}\bar{5}\bar{6}}
\tdr_{\bar{6}7\bar{8}}
\tdr_{\bar{5}7\bar{9}}
\underline{\tdr_{\bar{4}\bar{8}\bar{9}}}%
\tdr_{456}
\tdr_{234}
\underline{\tdr_{26\bar{9}}}%
\tdr_{35\bar{9}}
\tdr_{1\bar{2}4}
\underline{\tdr_{16\bar{8}}}%
\tdr_{\bar{2}5\bar{8}}
\underline{\tdr_{12\bar{4}}}%
\tdr_{1\bar{6}8}
\tdr_{2\bar{5}8}
\tdr_{\bar{2}3\bar{4}}
\tdr_{\bar{2}\bar{6}9}
\tdr_{3\bar{5}9}
\tdr_{678}
\tdr_{579}
\tdr_{489}
\\
&=
\tdr_{\bar{4}\bar{5}\bar{6}}
\tdr_{\bar{6}7\bar{8}}
\tdr_{\bar{5}7\bar{9}}
\tdr_{456}
\tdr_{234}
\tdr_{1\bar{2}4}
\underline{
\tdr_{\bar{4}\bar{8}\bar{9}}
\tdr_{26\bar{9}}
\tdr_{16\bar{8}}
\tdr_{12\bar{4}}
}
\tdr_{35\bar{9}}
\tdr_{\bar{2}5\bar{8}}
\tdr_{1\bar{6}8}
\tdr_{2\bar{5}8}
\tdr_{\bar{2}3\bar{4}}
\tdr_{\bar{2}\bar{6}9}
\tdr_{3\bar{5}9}
\tdr_{678}
\tdr_{579}
\tdr_{489}
\\
&=
\tdr_{\bar{4}\bar{5}\bar{6}}
\tdr_{\bar{6}7\bar{8}}
\tdr_{\bar{5}7\bar{9}}
\tdr_{456}
\tdr_{234}
\tdr_{1\bar{2}4}
\tdr_{12\bar{4}}
\tdr_{16\bar{8}}
\tdr_{26\bar{9}}
\underline{\tdr_{\bar{4}\bar{8}\bar{9}}}%
{\ }
\underline{\tdr_{35\bar{9}}}%
{\ }
\underline{\tdr_{\bar{2}5\bar{8}}}%
\tdr_{1\bar{6}8}
\tdr_{2\bar{5}8}
\underline{\tdr_{\bar{2}3\bar{4}}}%
\tdr_{\bar{2}\bar{6}9}
\tdr_{3\bar{5}9}
\tdr_{678}
\tdr_{579}
\tdr_{489}
\\
&=
\tdr_{\bar{4}\bar{5}\bar{6}}
\tdr_{\bar{6}7\bar{8}}
\tdr_{\bar{5}7\bar{9}}
\tdr_{456}
\tdr_{234}
\tdr_{1\bar{2}4}
\tdr_{12\bar{4}}
\tdr_{16\bar{8}}
\tdr_{26\bar{9}}
\underline{
\tdr_{\bar{4}\bar{8}\bar{9}}
\tdr_{35\bar{9}}
\tdr_{\bar{2}5\bar{8}}
\tdr_{\bar{2}3\bar{4}}
}
\tdr_{1\bar{6}8}
\tdr_{2\bar{5}8}
\tdr_{\bar{2}\bar{6}9}
\tdr_{3\bar{5}9}
\tdr_{678}
\tdr_{579}
\tdr_{489}
\\
&=
\tdr_{\bar{4}\bar{5}\bar{6}}
\tdr_{\bar{6}7\bar{8}}
\underline{\tdr_{\bar{5}7\bar{9}}}%
\tdr_{456}
\tdr_{234}
\tdr_{1\bar{2}4}
\tdr_{12\bar{4}}
\tdr_{16\bar{8}}
\underline{\tdr_{26\bar{9}}}%
\tdr_{\bar{2}3\bar{4}}
\tdr_{\bar{2}5\bar{8}}
\tdr_{35\bar{9}}
\tdr_{\bar{4}\bar{8}\bar{9}}
\tdr_{1\bar{6}8}
\underline{\tdr_{2\bar{5}8}}%
\tdr_{\bar{2}\bar{6}9}
\tdr_{3\bar{5}9}
\underline{\tdr_{678}}%
\tdr_{579}
\tdr_{489}
\\
&=
\tdr_{\bar{4}\bar{5}\bar{6}}
\tdr_{\bar{6}7\bar{8}}
\tdr_{456}
\tdr_{234}
\tdr_{1\bar{2}4}
\tdr_{12\bar{4}}
\tdr_{16\bar{8}}
\tdr_{\bar{2}3\bar{4}}
\tdr_{\bar{2}5\bar{8}}
\tdr_{1\bar{6}8}
\underline{
\tdr_{\bar{5}7\bar{9}}
\tdr_{26\bar{9}}
\tdr_{2\bar{5}8}
\tdr_{678}
}
\tdr_{35\bar{9}}
\tdr_{\bar{4}\bar{8}\bar{9}}
\tdr_{\bar{2}\bar{6}9}
\tdr_{3\bar{5}9}
\tdr_{579}
\tdr_{489}
\\
&=
\tdr_{\bar{4}\bar{5}\bar{6}}
\tdr_{\bar{6}7\bar{8}}
\tdr_{456}
\tdr_{234}
\tdr_{1\bar{2}4}
\tdr_{12\bar{4}}
\tdr_{16\bar{8}}
\tdr_{\bar{2}3\bar{4}}
\tdr_{\bar{2}5\bar{8}}
\tdr_{1\bar{6}8}
\tdr_{678}
\tdr_{2\bar{5}8}
\tdr_{26\bar{9}}
\underline{\tdr_{\bar{5}7\bar{9}}}%
{\ }
\underline{\tdr_{35\bar{9}}}%
\tdr_{\bar{4}\bar{8}\bar{9}}
\tdr_{\bar{2}\bar{6}9}
\underline{\tdr_{3\bar{5}9}}%
{\ }
\underline{\tdr_{579}}%
\tdr_{489}
\\
&=
\tdr_{456}
\tdr_{\bar{4}\bar{5}\bar{6}}
\tdr_{234}\tdr_{1\bar{2}4}\tdr_{12\bar{4}}\tdr_{\bar{2}3\bar{4}}
\tdr_{\bar{6}7\bar{8}}\tdr_{16\bar{8}}\tdr_{1\bar{6}8}\tdr_{678}
\tdr_{\bar{2}5\bar{8}}
\tdr_{2\bar{5}8}
\tdr_{\bar{2}\bar{6}9}
\tdr_{26\bar{9}}
\underline{
\tdr_{\bar{5}7\bar{9}}\tdr_{35\bar{9}}\tdr_{3\bar{5}9}\tdr_{579}
}
\tdr_{\bar{4}\bar{8}\bar{9}}
\tdr_{489}
\\
&=
\tdr_{456}
\tdr_{\bar{4}\bar{5}\bar{6}}
\tdr_{234}\tdr_{1\bar{2}4}\tdr_{12\bar{4}}\tdr_{\bar{2}3\bar{4}}
\tdr_{\bar{6}7\bar{8}}\tdr_{16\bar{8}}\tdr_{1\bar{6}8}\tdr_{678}
\tdr_{\bar{2}5\bar{8}}
\tdr_{2\bar{5}8}
\tdr_{\bar{2}\bar{6}9}
\tdr_{26\bar{9}}
\tdr_{579}\tdr_{3\bar{5}9}\tdr_{35\bar{9}}\tdr_{\bar{5}7\bar{9}}
\tdr_{\bar{4}\bar{8}\bar{9}}
\tdr_{489}
.
\end{align}
\end{landscape}
\begin{landscape}
\section{Proof of Lemma \ref{R20 super lemma}}\label{app R20 super}
\begin{align}
&
\tdmC_{489}
\tdmC_{\bar{4}\bar{8}\bar{9}}
\underline{
\tdnC_{579}\tdmC_{3\bar{5}9}\tdmC_{35\bar{9}}\tdnC_{\bar{5}7\bar{9}}
}
\tdmC_{26\bar{9}}
\tdmC_{\bar{2}\bar{6}9}
\tdmC_{2\bar{5}8}
\tdmC_{\bar{2}5\bar{8}}
\tdnC_{\bar{6}7\bar{8}}
\tdmC_{16\bar{8}}\tdmC_{1\bar{6}8}\tdnC_{678}
\tdrC_{234}\tdrC_{1\bar{2}4}\tdrC_{12\bar{4}}\tdrC_{\bar{2}3\bar{4}}
\tdmC_{\bar{4}\bar{5}\bar{6}}
\tdmC_{456}
\\
&=
\tdmC_{489}
\tdmC_{\bar{4}\bar{8}\bar{9}}
\tdnC_{\bar{5}7\bar{9}}\tdmC_{35\bar{9}}\tdmC_{3\bar{5}9}
\underline{\tdnC_{579}}%
\tdmC_{26\bar{9}}
\underline{\tdmC_{\bar{2}\bar{6}9}}%
\tdmC_{2\bar{5}8}
\underline{\tdmC_{\bar{2}5\bar{8}}}%
{\ }
\underline{\tdnC_{\bar{6}7\bar{8}}}%
\tdmC_{16\bar{8}}\tdmC_{1\bar{6}8}\tdnC_{678}
\tdrC_{234}\tdrC_{1\bar{2}4}\tdrC_{12\bar{4}}\tdrC_{\bar{2}3\bar{4}}
\tdmC_{\bar{4}\bar{5}\bar{6}}
\tdmC_{456}
\\
&=
\tdmC_{489}
\tdmC_{\bar{4}\bar{8}\bar{9}}
\tdnC_{\bar{5}7\bar{9}}\tdmC_{35\bar{9}}\tdmC_{3\bar{5}9}
\tdmC_{26\bar{9}}
\tdmC_{2\bar{5}8}
\underline{
\tdnC_{579}
\tdmC_{\bar{2}\bar{6}9}
\tdmC_{\bar{2}5\bar{8}}
\tdnC_{\bar{6}7\bar{8}}
}%
\tdmC_{16\bar{8}}\tdmC_{1\bar{6}8}\tdnC_{678}
\tdrC_{234}\tdrC_{1\bar{2}4}\tdrC_{12\bar{4}}\tdrC_{\bar{2}3\bar{4}}
\tdmC_{\bar{4}\bar{5}\bar{6}}
\tdmC_{456}
\\
&=
\underline{\tdmC_{489}}%
\tdmC_{\bar{4}\bar{8}\bar{9}}
\tdnC_{\bar{5}7\bar{9}}\tdmC_{35\bar{9}}
\underline{\tdmC_{3\bar{5}9}}%
\tdmC_{26\bar{9}}
\underline{\tdmC_{2\bar{5}8}}%
\tdnC_{\bar{6}7\bar{8}}
\tdmC_{\bar{2}5\bar{8}}
\tdmC_{\bar{2}\bar{6}9}
\tdnC_{579}
\tdmC_{16\bar{8}}\tdmC_{1\bar{6}8}\tdnC_{678}
\underline{\tdrC_{234}}%
\tdrC_{1\bar{2}4}\tdrC_{12\bar{4}}\tdrC_{\bar{2}3\bar{4}}
\tdmC_{\bar{4}\bar{5}\bar{6}}
\tdmC_{456}
\\
&=
\tdmC_{\bar{4}\bar{8}\bar{9}}
\tdnC_{\bar{5}7\bar{9}}\tdmC_{35\bar{9}}
\tdmC_{26\bar{9}}
\tdnC_{\bar{6}7\bar{8}}
\tdmC_{\bar{2}5\bar{8}}
\underline{
\tdmC_{489}
\tdmC_{3\bar{5}9}
\tdmC_{2\bar{5}8}
\tdrC_{234}
}
\tdmC_{\bar{2}\bar{6}9}
\tdnC_{579}
\tdmC_{16\bar{8}}\tdmC_{1\bar{6}8}\tdnC_{678}
\tdrC_{1\bar{2}4}\tdrC_{12\bar{4}}\tdrC_{\bar{2}3\bar{4}}
\tdmC_{\bar{4}\bar{5}\bar{6}}
\tdmC_{456}
\\
&=
\tdmC_{\bar{4}\bar{8}\bar{9}}
\tdnC_{\bar{5}7\bar{9}}\tdmC_{35\bar{9}}
\tdmC_{26\bar{9}}
\tdnC_{\bar{6}7\bar{8}}
\tdmC_{\bar{2}5\bar{8}}
\tdrC_{234}
\tdmC_{2\bar{5}8}
\tdmC_{3\bar{5}9}
\underline{\tdmC_{489}}%
{\ }
\underline{\tdmC_{\bar{2}\bar{6}9}}%
\tdnC_{579}
\tdmC_{16\bar{8}}
\underline{\tdmC_{1\bar{6}8}}%
\tdnC_{678}
\underline{\tdrC_{1\bar{2}4}}%
\tdrC_{12\bar{4}}\tdrC_{\bar{2}3\bar{4}}
\tdmC_{\bar{4}\bar{5}\bar{6}}
\tdmC_{456}
\\
&=
\tdmC_{\bar{4}\bar{8}\bar{9}}
\tdnC_{\bar{5}7\bar{9}}\tdmC_{35\bar{9}}
\tdmC_{26\bar{9}}
\tdnC_{\bar{6}7\bar{8}}
\tdmC_{\bar{2}5\bar{8}}
\tdrC_{234}
\tdmC_{2\bar{5}8}
\tdmC_{3\bar{5}9}
\tdmC_{16\bar{8}}
\underline{
\tdmC_{489}
\tdmC_{\bar{2}\bar{6}9}
\tdmC_{1\bar{6}8}
\tdrC_{1\bar{2}4}
}
\tdnC_{579}
\tdnC_{678}
\tdrC_{12\bar{4}}\tdrC_{\bar{2}3\bar{4}}
\tdmC_{\bar{4}\bar{5}\bar{6}}
\tdmC_{456}
\\
&=
\tdmC_{\bar{4}\bar{8}\bar{9}}
\tdnC_{\bar{5}7\bar{9}}\tdmC_{35\bar{9}}
\tdmC_{26\bar{9}}
\tdnC_{\bar{6}7\bar{8}}
\tdmC_{\bar{2}5\bar{8}}
\tdrC_{234}
\tdmC_{2\bar{5}8}
\tdmC_{3\bar{5}9}
\tdmC_{16\bar{8}}
\tdrC_{1\bar{2}4}
\tdmC_{1\bar{6}8}
\tdmC_{\bar{2}\bar{6}9}
\underline{\tdmC_{489}}%
{\ }
\underline{\tdnC_{579}}%
{\ }
\underline{\tdnC_{678}}%
\tdrC_{12\bar{4}}\tdrC_{\bar{2}3\bar{4}}
\tdmC_{\bar{4}\bar{5}\bar{6}}
\underline{\tdmC_{456}}
\\
&=
\tdmC_{\bar{4}\bar{8}\bar{9}}
\tdnC_{\bar{5}7\bar{9}}\tdmC_{35\bar{9}}
\tdmC_{26\bar{9}}
\tdnC_{\bar{6}7\bar{8}}
\tdmC_{\bar{2}5\bar{8}}
\tdrC_{234}
\tdmC_{2\bar{5}8}
\tdmC_{3\bar{5}9}
\tdmC_{16\bar{8}}
\tdrC_{1\bar{2}4}
\tdmC_{1\bar{6}8}
\tdmC_{\bar{2}\bar{6}9}
\tdrC_{12\bar{4}}\tdrC_{\bar{2}3\bar{4}}
\tdmC_{\bar{4}\bar{5}\bar{6}}
\underline{
\tdmC_{489}
\tdnC_{579}
\tdnC_{678}
\tdmC_{456}
}
\\
&=
\tdmC_{\bar{4}\bar{8}\bar{9}}
\tdnC_{\bar{5}7\bar{9}}\tdmC_{35\bar{9}}
\tdmC_{26\bar{9}}
\tdnC_{\bar{6}7\bar{8}}
\tdmC_{\bar{2}5\bar{8}}
\tdrC_{234}
\tdmC_{2\bar{5}8}
\underline{\tdmC_{3\bar{5}9}}%
\tdmC_{16\bar{8}}
\tdrC_{1\bar{2}4}
\tdmC_{1\bar{6}8}
\underline{\tdmC_{\bar{2}\bar{6}9}}%
\tdrC_{12\bar{4}}
\underline{\tdrC_{\bar{2}3\bar{4}}}%
{\ }
\underline{\tdmC_{\bar{4}\bar{5}\bar{6}}}%
\tdmC_{456}
\tdnC_{678}
\tdnC_{579}
\tdmC_{489}
\\
&=
\tdmC_{\bar{4}\bar{8}\bar{9}}
\tdnC_{\bar{5}7\bar{9}}\tdmC_{35\bar{9}}
\tdmC_{26\bar{9}}
\tdnC_{\bar{6}7\bar{8}}
\tdmC_{\bar{2}5\bar{8}}
\tdrC_{234}
\tdmC_{2\bar{5}8}
\tdmC_{16\bar{8}}
\tdrC_{1\bar{2}4}
\tdmC_{1\bar{6}8}
\tdrC_{12\bar{4}}
\underline{
\tdmC_{3\bar{5}9}
\tdmC_{\bar{2}\bar{6}9}
\tdrC_{\bar{2}3\bar{4}}
\tdmC_{\bar{4}\bar{5}\bar{6}}
}
\tdmC_{456}
\tdnC_{678}
\tdnC_{579}
\tdmC_{489}
\\
&=
\tdmC_{\bar{4}\bar{8}\bar{9}}
\tdnC_{\bar{5}7\bar{9}}\tdmC_{35\bar{9}}
\tdmC_{26\bar{9}}
\tdnC_{\bar{6}7\bar{8}}
\tdmC_{\bar{2}5\bar{8}}
\tdrC_{234}
\underline{\tdmC_{2\bar{5}8}}%
\tdmC_{16\bar{8}}
\tdrC_{1\bar{2}4}
\underline{\tdmC_{1\bar{6}8}}%
{\ }
\underline{\tdrC_{12\bar{4}}}%
{\ }
\underline{\tdmC_{\bar{4}\bar{5}\bar{6}}}%
\tdrC_{\bar{2}3\bar{4}}
\tdmC_{\bar{2}\bar{6}9}
\tdmC_{3\bar{5}9}
\tdmC_{456}
\tdnC_{678}
\tdnC_{579}
\tdmC_{489}
\\
&=
\tdmC_{\bar{4}\bar{8}\bar{9}}
\tdnC_{\bar{5}7\bar{9}}\tdmC_{35\bar{9}}
\tdmC_{26\bar{9}}
\tdnC_{\bar{6}7\bar{8}}
\tdmC_{\bar{2}5\bar{8}}
\tdrC_{234}
\tdmC_{16\bar{8}}
\tdrC_{1\bar{2}4}
\underline{
\tdmC_{2\bar{5}8}
\tdmC_{1\bar{6}8}
\tdrC_{12\bar{4}}
\tdmC_{\bar{4}\bar{5}\bar{6}}
}
\tdrC_{\bar{2}3\bar{4}}
\tdmC_{\bar{2}\bar{6}9}
\tdmC_{3\bar{5}9}
\tdmC_{456}
\tdnC_{678}
\tdnC_{579}
\tdmC_{489}
\\
&=
\underline{\tdmC_{\bar{4}\bar{8}\bar{9}}}%
{\ }
\underline{\tdnC_{\bar{5}7\bar{9}}}%
\tdmC_{35\bar{9}}
\tdmC_{26\bar{9}}
\underline{\tdnC_{\bar{6}7\bar{8}}}%
\tdmC_{\bar{2}5\bar{8}}
\tdrC_{234}
\tdmC_{16\bar{8}}
\tdrC_{1\bar{2}4}
\underline{\tdmC_{\bar{4}\bar{5}\bar{6}}}%
\tdrC_{12\bar{4}}
\tdmC_{1\bar{6}8}
\tdmC_{2\bar{5}8}
\tdrC_{\bar{2}3\bar{4}}
\tdmC_{\bar{2}\bar{6}9}
\tdmC_{3\bar{5}9}
\tdmC_{456}
\tdnC_{678}
\tdnC_{579}
\tdmC_{489}
\\
&=
\underline{
\tdmC_{\bar{4}\bar{8}\bar{9}}
\tdnC_{\bar{5}7\bar{9}}
\tdnC_{\bar{6}7\bar{8}}
\tdmC_{\bar{4}\bar{5}\bar{6}}
}
\tdmC_{35\bar{9}}
\tdmC_{26\bar{9}}
\tdmC_{\bar{2}5\bar{8}}
\tdrC_{234}
\tdmC_{16\bar{8}}
\tdrC_{1\bar{2}4}
\tdrC_{12\bar{4}}
\tdmC_{1\bar{6}8}
\tdmC_{2\bar{5}8}
\tdrC_{\bar{2}3\bar{4}}
\tdmC_{\bar{2}\bar{6}9}
\tdmC_{3\bar{5}9}
\tdmC_{456}
\tdnC_{678}
\tdnC_{579}
\tdmC_{489}
\\
&=
\tdmC_{\bar{4}\bar{5}\bar{6}}
\tdnC_{\bar{6}7\bar{8}}
\tdnC_{\bar{5}7\bar{9}}
\tdmC_{\bar{4}\bar{8}\bar{9}}
\tdmC_{35\bar{9}}
\tdmC_{26\bar{9}}
\underline{\tdmC_{\bar{2}5\bar{8}}}%
\tdrC_{234}
\underline{\tdmC_{16\bar{8}}}%
{\ }
\underline{\tdrC_{1\bar{2}4}}%
\tdrC_{12\bar{4}}
\tdmC_{1\bar{6}8}
\tdmC_{2\bar{5}8}
\tdrC_{\bar{2}3\bar{4}}
\tdmC_{\bar{2}\bar{6}9}
\tdmC_{3\bar{5}9}
\underline{\tdmC_{456}}%
\tdnC_{678}
\tdnC_{579}
\tdmC_{489}
\\
&=
\tdmC_{\bar{4}\bar{5}\bar{6}}
\tdnC_{\bar{6}7\bar{8}}
\tdnC_{\bar{5}7\bar{9}}
\tdmC_{\bar{4}\bar{8}\bar{9}}
\tdmC_{35\bar{9}}
\tdmC_{26\bar{9}}
\tdrC_{234}
\underline{
\tdmC_{\bar{2}5\bar{8}}
\tdmC_{16\bar{8}}
\tdrC_{1\bar{2}4}
\tdmC_{456}
}
\tdrC_{12\bar{4}}
\tdmC_{1\bar{6}8}
\tdmC_{2\bar{5}8}
\tdrC_{\bar{2}3\bar{4}}
\tdmC_{\bar{2}\bar{6}9}
\tdmC_{3\bar{5}9}
\tdnC_{678}
\tdnC_{579}
\tdmC_{489}
\\
&=
\tdmC_{\bar{4}\bar{5}\bar{6}}
\tdnC_{\bar{6}7\bar{8}}
\tdnC_{\bar{5}7\bar{9}}
\tdmC_{\bar{4}\bar{8}\bar{9}}
\underline{
\tdmC_{35\bar{9}}
\tdmC_{26\bar{9}}
\tdrC_{234}
\tdmC_{456}
}
\tdrC_{1\bar{2}4}
\tdmC_{16\bar{8}}
\tdmC_{\bar{2}5\bar{8}}
\tdrC_{12\bar{4}}
\tdmC_{1\bar{6}8}
\tdmC_{2\bar{5}8}
\tdrC_{\bar{2}3\bar{4}}
\tdmC_{\bar{2}\bar{6}9}
\tdmC_{3\bar{5}9}
\tdnC_{678}
\tdnC_{579}
\tdmC_{489}
\\
&=
\tdmC_{\bar{4}\bar{5}\bar{6}}
\tdnC_{\bar{6}7\bar{8}}
\tdnC_{\bar{5}7\bar{9}}
\underline{\tdmC_{\bar{4}\bar{8}\bar{9}}}%
\tdmC_{456}
\tdrC_{234}
\underline{\tdmC_{26\bar{9}}}%
\tdmC_{35\bar{9}}
\tdrC_{1\bar{2}4}
\underline{\tdmC_{16\bar{8}}}%
\tdmC_{\bar{2}5\bar{8}}
\underline{\tdrC_{12\bar{4}}}%
\tdmC_{1\bar{6}8}
\tdmC_{2\bar{5}8}
\tdrC_{\bar{2}3\bar{4}}
\tdmC_{\bar{2}\bar{6}9}
\tdmC_{3\bar{5}9}
\tdnC_{678}
\tdnC_{579}
\tdmC_{489}
\\
&=
\tdmC_{\bar{4}\bar{5}\bar{6}}
\tdnC_{\bar{6}7\bar{8}}
\tdnC_{\bar{5}7\bar{9}}
\tdmC_{456}
\tdrC_{234}
\tdrC_{1\bar{2}4}
\underline{
\tdmC_{\bar{4}\bar{8}\bar{9}}
\tdmC_{26\bar{9}}
\tdmC_{16\bar{8}}
\tdrC_{12\bar{4}}
}
\tdmC_{35\bar{9}}
\tdmC_{\bar{2}5\bar{8}}
\tdmC_{1\bar{6}8}
\tdmC_{2\bar{5}8}
\tdrC_{\bar{2}3\bar{4}}
\tdmC_{\bar{2}\bar{6}9}
\tdmC_{3\bar{5}9}
\tdnC_{678}
\tdnC_{579}
\tdmC_{489}
\\
&=
\tdmC_{\bar{4}\bar{5}\bar{6}}
\tdnC_{\bar{6}7\bar{8}}
\tdnC_{\bar{5}7\bar{9}}
\tdmC_{456}
\tdrC_{234}
\tdrC_{1\bar{2}4}
\tdrC_{12\bar{4}}
\tdmC_{16\bar{8}}
\tdmC_{26\bar{9}}
\underline{\tdmC_{\bar{4}\bar{8}\bar{9}}}%
{\ }
\underline{\tdmC_{35\bar{9}}}%
{\ }
\underline{\tdmC_{\bar{2}5\bar{8}}}%
\tdmC_{1\bar{6}8}
\tdmC_{2\bar{5}8}
\underline{\tdrC_{\bar{2}3\bar{4}}}%
\tdmC_{\bar{2}\bar{6}9}
\tdmC_{3\bar{5}9}
\tdnC_{678}
\tdnC_{579}
\tdmC_{489}
\\
&=
\tdmC_{\bar{4}\bar{5}\bar{6}}
\tdnC_{\bar{6}7\bar{8}}
\tdnC_{\bar{5}7\bar{9}}
\tdmC_{456}
\tdrC_{234}
\tdrC_{1\bar{2}4}
\tdrC_{12\bar{4}}
\tdmC_{16\bar{8}}
\tdmC_{26\bar{9}}
\underline{
\tdmC_{\bar{4}\bar{8}\bar{9}}
\tdmC_{35\bar{9}}
\tdmC_{\bar{2}5\bar{8}}
\tdrC_{\bar{2}3\bar{4}}
}
\tdmC_{1\bar{6}8}
\tdmC_{2\bar{5}8}
\tdmC_{\bar{2}\bar{6}9}
\tdmC_{3\bar{5}9}
\tdnC_{678}
\tdnC_{579}
\tdmC_{489}
\\
&=
\tdmC_{\bar{4}\bar{5}\bar{6}}
\tdnC_{\bar{6}7\bar{8}}
\underline{\tdnC_{\bar{5}7\bar{9}}}%
\tdmC_{456}
\tdrC_{234}
\tdrC_{1\bar{2}4}
\tdrC_{12\bar{4}}
\tdmC_{16\bar{8}}
\underline{\tdmC_{26\bar{9}}}%
\tdrC_{\bar{2}3\bar{4}}
\tdmC_{\bar{2}5\bar{8}}
\tdmC_{35\bar{9}}
\tdmC_{\bar{4}\bar{8}\bar{9}}
\tdmC_{1\bar{6}8}
\underline{\tdmC_{2\bar{5}8}}%
\tdmC_{\bar{2}\bar{6}9}
\tdmC_{3\bar{5}9}
\underline{\tdnC_{678}}%
\tdnC_{579}
\tdmC_{489}
\\
&=
\tdmC_{\bar{4}\bar{5}\bar{6}}
\tdnC_{\bar{6}7\bar{8}}
\tdmC_{456}
\tdrC_{234}
\tdrC_{1\bar{2}4}
\tdrC_{12\bar{4}}
\tdmC_{16\bar{8}}
\tdrC_{\bar{2}3\bar{4}}
\tdmC_{\bar{2}5\bar{8}}
\tdmC_{1\bar{6}8}
\underline{
\tdnC_{\bar{5}7\bar{9}}
\tdmC_{26\bar{9}}
\tdmC_{2\bar{5}8}
\tdnC_{678}
}
\tdmC_{35\bar{9}}
\tdmC_{\bar{4}\bar{8}\bar{9}}
\tdmC_{\bar{2}\bar{6}9}
\tdmC_{3\bar{5}9}
\tdnC_{579}
\tdmC_{489}
\\
&=
\tdmC_{\bar{4}\bar{5}\bar{6}}
\tdnC_{\bar{6}7\bar{8}}
\tdmC_{456}
\tdrC_{234}
\tdrC_{1\bar{2}4}
\tdrC_{12\bar{4}}
\tdmC_{16\bar{8}}
\tdrC_{\bar{2}3\bar{4}}
\tdmC_{\bar{2}5\bar{8}}
\tdmC_{1\bar{6}8}
\tdnC_{678}
\tdmC_{2\bar{5}8}
\tdmC_{26\bar{9}}
\underline{\tdnC_{\bar{5}7\bar{9}}}%
{\ }
\underline{\tdmC_{35\bar{9}}}%
\tdmC_{\bar{4}\bar{8}\bar{9}}
\tdmC_{\bar{2}\bar{6}9}
\underline{\tdmC_{3\bar{5}9}}%
{\ }
\underline{\tdnC_{579}}%
\tdmC_{489}
\\
&=
\tdmC_{456}
\tdmC_{\bar{4}\bar{5}\bar{6}}
\tdrC_{234}\tdrC_{1\bar{2}4}\tdrC_{12\bar{4}}\tdrC_{\bar{2}3\bar{4}}
\tdnC_{\bar{6}7\bar{8}}\tdmC_{16\bar{8}}\tdmC_{1\bar{6}8}\tdnC_{678}
\tdmC_{\bar{2}5\bar{8}}
\tdmC_{2\bar{5}8}
\tdmC_{\bar{2}\bar{6}9}
\tdmC_{26\bar{9}}
\underline{
\tdnC_{\bar{5}7\bar{9}}\tdmC_{35\bar{9}}\tdmC_{3\bar{5}9}\tdnC_{579}
}
\tdmC_{\bar{4}\bar{8}\bar{9}}
\tdmC_{489}
\\
&=
\tdmC_{456}
\tdmC_{\bar{4}\bar{5}\bar{6}}
\tdrC_{234}\tdrC_{1\bar{2}4}\tdrC_{12\bar{4}}\tdrC_{\bar{2}3\bar{4}}
\tdnC_{\bar{6}7\bar{8}}\tdmC_{16\bar{8}}\tdmC_{1\bar{6}8}\tdnC_{678}
\tdmC_{\bar{2}5\bar{8}}
\tdmC_{2\bar{5}8}
\tdmC_{\bar{2}\bar{6}9}
\tdmC_{26\bar{9}}
\tdnC_{579}\tdmC_{3\bar{5}9}\tdmC_{35\bar{9}}\tdnC_{\bar{5}7\bar{9}}
\tdmC_{\bar{4}\bar{8}\bar{9}}
\tdmC_{489}
.
\end{align}
\end{landscape}

\begin{thebibliography}{99}
%
%
\bibitem{ABS12}
V.~E.~Adler, A.~I.~Bobenko, Y.~B.~Suris,
{\it  Classification of Integrable Discrete Equations of Octahedron Type},
\href{https://doi.org/10.1093/imrn/rnr083}{Int. Math. Res. Not. {\bf 2012} 1822--1889 (2012)}.
%
%
\bibitem{Bax07}
R.~J.~Baxter,
{\it Exactly solved models in statistical mechanics},
Dover (2007).
%
%
\bibitem{BB92}
V.~V.~Bazhanov, R.~J.~Baxter,
{\it New solvable lattice models in three dimensions},
\href{https://link.springer.com/article/10.1007/BF01050423}{J. Stat. Phys. {\bf 69} 453--485 (1992)}.
%
%
%
%
\bibitem{BS06}
V.~V.~Bazhanov, S.~M.~Sergeev,
{\it Zamolodchikov's tetrahedron equation and hidden structure of quantum groups},
\href{https://iopscience.iop.org/article/10.1088/0305-4470/39/13/009}{J. Phys. A: Math. Theor. {\bf 39} 3295 16pages (2006)}.
%
%
\bibitem{BZ01}
A.~Berenstein, A.~Zelevinsky,
{\it Tensor product multiplicities, canonical bases and totally positive varieties},
\href{https://link.springer.com/article/10.1007/s002220000102}{Inventiones mathematicae {\bf 143} 77--128 (2001)}.
%
%
\bibitem{CS96}
J.~S.~Carter, M.~Saito,
{\it On formulations and solutions of simplex equations},
\href{https://doi.org/10.1142/S0217751X96002066}{Int. J. Mod. Phys. A {\bf 11} 4453--4463 (1996)}.
%
%
\bibitem{CCZ13}
V.~Caudrelier, N.~Cramp\'{e}, Q.~C.~Zhang,
{\it Set-theoretical reflection equation: Classification of reflection maps},
\href{https://doi.org/10.1088/1751-8113/46/9/095203}{J. Phys. A: Math. Theor. {\bf 46} 095203 12pages (2013)}.
%
%
\bibitem{CZ14}
V.~Caudrelier, Q.~C.~Zhang,
{\it Yang-Baxter and reflection maps from vector solitons with a boundary},
\href{https://doi.org/10.1088/0951-7715/27/6/1081}{Nonlinearity {\bf 27} 1081--1103 (2014)}.
%
%
\bibitem{Che84}
I.~V.~Cherednik,
{\it Factorizing particles on a half-line and root systems},
\href{https://link.springer.com/article/10.1007/BF01038545}{Theor. Math. Phys. {\bf 61}  35--44 (1984)}.
%
%
\bibitem{DM03}
G.~W.~Delius, N.~J.~MacKay,
{\it Quantum group symmetry in sine-Gordon and affine Toda field theories on the half-line},
\href{https://doi.org/10.1007/s00220-002-0758-4}{Commun. Math. Phys. {\bf 233} 173--190 (2003)}.
%
%
%
%
\bibitem{Dri86}
V.~G.~Drinfeld,
{\it Quantum Groups},
Proc. ICM. {\bf 1,2} 798--820 (1986).
%
%
\bibitem{Dri92}
V.~G.~Drinfeld,
{\it On some unsolved problems in quantum group theory},
\href{https://doi.org/10.1007/BFb0101175}{Quantum groups, Lect. Notes in Math. ed. P.~P.~Kulish, Springer, Berlin, Heidelberg {\bf 1510} 1--8 (1992)}.
%
%
\bibitem{IK97}
A.~P.~Isaev, P.~P.~Kulish,
{\it Tetrahedron reflection equations},
\href{https://www.worldscientific.com/doi/abs/10.1142/S0217732397000443}{Mod. Phys. Lett. A. {\bf 12} 427--437 (1997)}.
%
%
\bibitem{Jim86b}
M.~Jimbo,
{\it A $q$-analogue of $U(\mathfrak{gl}(N+1))$, Hecke algebra, and the Yang-Baxter equation},
\href{https://link.springer.com/article/10.1007\%2FBF00400222}{Lett. Math. Phys. {\bf 11} 247--252 (1986)}.
%
%
\bibitem{KV94}
M.~M.~Kapranov, V.~A.~Voevodsky,
{\it 2-categories and Zamolodchikov tetrahedra equations},
\href{https://www.math.ias.edu/vladimir/node/71}{Proc. Sympos. Pure Math. {\bf 56} 177--259 (1994)}.
%
%
\bibitem{KKS98}
R.~M.~Kashaev, I.~G.~Korepanov, S.~M.~Sergeev,
{\it Functional Tetrahedron Equation},
\href{https://doi.org/10.1007/BF02557179}{Theor. Math. Phys. {\bf 117} 1402--1413 (1998)}.
%
%
%
%
\bibitem{KNPT19}
P.~Kassotakis, M.~Nieszporski, V.~Papageorgiou, A.~Tongas,
{\it Tetrahedron maps and symmetries of three dimensional integrable discrete equations},
\href{https://doi.org/10.1063/1.5124874}{J. Math. Phys. {\bf 60} 123503 18pages (2019)}.
%
%
\bibitem{KS93}
D.~Kazhdan, Y.~Soibelman,
{\it Representations of the Quantized Function Algebras, 2-Categories and Zamolodchikov Tetrahedra Equation},
\href{https://doi.org/10.1007/978-1-4612-0345-2_10}{Gelfand I.M., Corwin L., Lepowsky J. (eds) The Gelfand Mathematical Seminars, 1990–1992. 163--171 (1993)}.
%
%
%
%
\bibitem{KO12}
A.~Kuniba, M.~Okado,
{\it Tetrahedron and 3D reflection equations from quantized algebra of functions},
\href{https://iopscience.iop.org/article/10.1088/1751-8113/45/46/465206/meta}{J. Phys. A: Math. Theor. {\bf 45} 465206 27pages (2012)}.
%
%
\bibitem{KO13}
A.~Kuniba, M.~Okado,
{\it A solution of the 3D reflection equation from quantized algebra of functions of type B},
\href{https://www.worldscientific.com/doi/abs/10.1142/9789814518550_0021}{Symmetries and groups in contemporary physics, Nankai Series in Pure, Applied Mathematics and Theoretical Physics 181--190 (2013)}.
%
%
\bibitem{KO19}
A.~Kuniba, M.~Okado,
{\it Set-theoretical solutions to the reflection equation associated to the quantum affine algebra of type $A_{n-1}^{(1)}$},
\href{https://academic.oup.com/integrablesystems/article/4/1/xyz013/5675090}{J. Int. Systems {\bf 4} xyz013 10pages (2019)}.
%
%
\bibitem{KOY05} A.~Kuniba, M.~Okado, Y.~Yamada,
{\it Box-ball system with reflecting end},
\href{https://doi.org/10.2991/jnmp.2005.12.4.4}{J. Nonlin. Math. Phys. {\bf 12} 475--507 (2005)}.
%
%
\bibitem{KOY13} A.~Kuniba, M.~Okado, Y.~Yamada,
{\it A Common Structure in PBW Bases of the Nilpotent Subalgebra of $U_q(\mathfrak{g})$ and Quantized Algebra of Functions},
\href{https://www.emis.de/journals/SIGMA/2013/049/}{SIGMA Symmetry Integrability Geom. Methods Appl. {\bf 9} 049 23pages (2013)}.
%
%
%
%
%
%
\bibitem{LP15}
T.~Lam, P.~Pylyavskyy,
{\it Electrical networks and Lie theory},
\href{http://dx.doi.org/10.2140/ant.2015.9.1401}{Algebra Number Theory {\bf 9} 1401--1418 (2015)}.
%
%
\bibitem{Lus90}
G.~Lusztig,
{\it Canonical  bases  arising  from  quantized  enveloping  algebras},
\href{https://www.jstor.org/stable/1990961}{J. Amer. Math. Soc. {\bf 3} 447--498 (1990)}.
%
%
\bibitem{Lus92}
G.~Lusztig,
{\it Introduction to quantized enveloping algebras},
\href{https://link.springer.com/chapter/10.1007/978-1-4612-2978-0_3}{Progr. in Math. ed. J.~Tirao, N.~Wallach, Birkha\"{a}user, Boston {\bf 105} 49--65 (1992)}.
%
%
\bibitem{Lus94}
G.~Lusztig,
{\it Total positivity in reductive groups},
\href{https://doi.org/10.1007/978-1-4612-0261-5_20}{Lie Theory and Geometry, Progr. Math. {\bf 123} 531--568 (1994)}.
%
%
\bibitem{Lus11}
G.~Lusztig,
{\it Piecewise linear parametrization of canonical bases},
\href{https://dx.doi.org/10.4310/PAMQ.2011.v7.n3.a8}{Pure Appl. Math. Quart. {\bf 7} 783--796 (2011)}.
%
%
%
%
\bibitem{RV16}
V.~Regelskis, B.~Vlaar,
{\it Reflection matrices, coideal subalgebras and generalized Satake diagrams of affine type},
\href{https://arxiv.org/abs/1602.08471}{arXiv:1602.08471}.
%
%
\bibitem{Ser98}
S.~Sergeev,
{\it Solutions of the functional tetrahedron equation connected with the local Yang-Baxter
equation for the ferro-electric condition},
\href{https://doi.org/10.1023/A:1007483621814}{Lett. Math. Phys. {\bf 45} 113--119 (1998)}.
%
%
\bibitem{SMS96}
S.~M.~Sergeev, V.~V.~Mangazeev, Y.~G.~Stroganov,
{\it The vertex formulation of the Bazhanov-Baxter model},
\href{https://link.springer.com/article/10.1007/BF02189224}{J. Stat. Phys. {\bf 82} 31--49 (1996)}.
%
%
\bibitem{Skl88}
E.~K.~Sklyanin,
{\it Boundary conditions for integrable quantum systems},
\href{https://doi.org/10.1088/0305-4470/21/10/015}{J. Phys. A: Math. Gen. {\bf 21} 2375--2389 (1988)}.
%
%
\bibitem{SVW20}
A.~Smoktunowicz, L.~Vendramin, R.~Weston,
{\it Combinatorial solutions to the reflection equation},
\href{https://doi.org/10.1016/j.jalgebra.2019.12.012}{J. Algebra {\bf 549} 268--290 (2020)}.
%
%
\bibitem{SZ20}
T.~Shoji, Z.~Zhou,
{\it Diagram automorphisms and quantum groups},
\href{https://doi.org/10.2969/jmsj/81488148}{J. Math. Soc. Japan {\bf 72} 639--671 (2020)}.
%
%
\bibitem{Ves07}
A.~Veselov,
{\it Yang-Baxter maps: Dynamical point of view},
\href{https://projecteuclid.org/euclid.msjm/1416864665}{MSJ Memoirs {\bf 17} 145--167 (2007)}.
%
%
\bibitem{Yon20}
A.~Yoneyama,
{\it Tetrahedron and 3D reflection equation from PBW bases of the nilpotent subalgebra of quantum superalgebras},
\href{https://arxiv.org/abs/2012.13385}{arXiv:2012.13385}.
%
%
\bibitem{Zam80}
A.~B.~Zamalodchikov,
{\it Tetrahedra equations and integrable systems in three-dimensional space},
\href{http://www.jetp.ac.ru/cgi-bin/e/index/e/52/2/p325?a=list}{Soviet Phys. JETP {\bf 52} 325--336 (1980)}.
%
%
\bibitem{Zam81}
A.~B.~Zamolodchikov,
{\it Tetrahedron Equations and the Relativistic $S$-Matrix of Straight-Strings in $2+1$-Dimensions},
\href{https://link.springer.com/article/10.1007/BF01209309}{Commun. Math. Phys. {\bf 79} 489--505 (1981)}.
%
%
\end{thebibliography}
\end{document}